\renewcommand\hat{}
\newcommand{\rom}[1]{\uppercase\expandafter{\romannumeral #1\relax}}
\newcommand{\turnstile}{\bm{\vdash}}
\newcommand{\I}{\Rightarrow}
\newcommand{\LI}{\multimap}
\newcommand{\A}{\forall}
\newcommand{\E}{\exists}
\newcommand{\OT}{\otimes}
\newcommand{\OP}{\oplus}
\newcommand{\SQ}{\turnstile}
\newcommand{\AMPOP}{\;\&\;}
\newcommand{\AMPCON}{\&}
\newcommand{\Chi}{X}
\newcommand{\omicron}{o}
\newcommand{\assign}{\leftarrow}
\newcommand{\deref}[1]{\hat{*}#1}
\newcommand{\app}{{\ }}
\theoremstyle{plain}
\newtheorem{thm}{Theorem}
\newtheorem{lem}[thm]{Lemma}
\theoremstyle{definition}
\newtheorem{defn}[thm]{Definition}
\newenvironment{scprooftree}[1]%
  {\gdef\scalefactor{#1}\begin{center}\proofSkipAmount \leavevmode}%
  {\scalebox{\scalefactor}{\DisplayProof}\proofSkipAmount \end{center} }
\newenvironment{customprooftree}
{\begin{scprooftree}{.75}}
{\end{scprooftree}}
\newcommand{\evalrelation}[4]{\langle #1, #2 \rangle \leadsto (#3,#4)}
\newcommand{\naturals}{\mathbb{N}}
\newcommand{\programs}{\mathbb{R}}
\newcommand{\memoryupdate}[3]{#1[#2\mapsto #3]}
\newcommand{\soswhile}[2]{while\ #1\ do\ #2}
\newcommand{\sosruleone}[2]{
\AxiomC{#1}
\UnaryInfC{#2}
}
\newcommand{\sosruletwo}[3]{
\AxiomC{#1}
\AxiomC{#2}
\BinaryInfC{#3}
}
\newcommand{\sosbinop}[1]{
\AxiomC{$\evalrelation{E_1}{M}{N_1}{M'}$}
\AxiomC{$\evalrelation{E_2}{M'}{N_2}{M''}$}
\BinaryInfC{$\evalrelation{E_1 \hat{#1} E_2}{M}{N_1 #1 N_2}{M''}$}
}
\newcommand{\sosgreaterthan}[2]{
\AxiomC{$\evalrelation{E_1}{M}{N_1}{M'}$}
\AxiomC{$\evalrelation{E_2}{M'}{N_2}{M''}$}
\AxiomC{#1}
\TrinaryInfC{$\evalrelation{E_1 \hat{>} E_2}{M}{#2}{M''}$}
}
\newcommand{\sosrulethree}[4]{
\AxiomC{#1}
\AxiomC{#2}
\AxiomC{#3}
\TrinaryInfC{#4}
}
\newcommand{\sosrulefour}[5]{
\AxiomC{#1}
\AxiomC{#2}
\AxiomC{#3}
\AxiomC{#4}
\QuaternaryInfC{#5}
}
\newcommand{\threeproofs}[3]{
\begin{customprooftree}
                #1
                #2
                #3
        \alwaysNoLine
        \TrinaryInfC{}
\end{customprooftree}
}
\newcommand{\twoproofs}[2]{
\begin{customprooftree}
                #1
                #2
        \alwaysNoLine
        \BinaryInfC{}
\end{customprooftree}
}
\newcommand{\oneproof}[1]{
\begin{customprooftree}
#1
\end{customprooftree}
}
\newcommand{\sequent}[3]{#1;#2\turnstile#3}
\newcommand{\usequent}[2]{\sequent{\Gamma}{#1}{#2}}
\newcommand{\ubsequent}[1]{\usequent{\Delta}{#1}}
\newcommand{\termmappingname}{t_{\programs}}
\newcommand{\memmappingname}{t_{m}}
\newcommand{\evaltname}{e}
\newcommand{\memformname}{m}
\newcommand{\termmapping}[1]{\termmappingname(#1)}
\newcommand{\NT}[1]{#1} %
\newcommand{\memmapping}[1]{\memmappingname(#1)}
\newcommand{\vt}[1]{#1}%
\newcommand{\addt}[2]{(add \;#1 \;#2)}
\newcommand{\subt}[2]{(sub \;#1 \;#2)}
\newcommand{\gett}[1]{(get \;#1)}
\newcommand{\sett}[2]{(set \;#1 \;#2)}
\newcommand{\seqt}[2]{(seq \;#1\linebreak \;#2)}
\newcommand{\gtt}[2]{(gt \;#1 \;#2)}
\newcommand{\wht}[2]{(wh \;#1 \;#2)}
\newcommand{\evalt}[3]{(\evaltname\;#1\;#2\;#3)}
\newcommand{\memform}[2]{(\memformname\;#1\;#2)}
\begin{document}

\doi{nnnnnnn.nnnnnnn}

\title{Towards Reasoning About Properties of Imperative Programs using
  Linear Logic}
\authorinfo{Daniel DaCosta}{University of Minnesota}{dacosta@cs.umn.edu}

\maketitle

\begin{abstract}
  In this paper we propose an approach to reasoning about properties of
imperative programs.  We assume in this context that the meanings of
program constructs are described using rules in the natural semantics
style with the additional observation that these rules may involve the
treatment of state.  Our approach involves modeling natural semantics
style rules within a logic and then reasoning about the behavior of
particular programs by reasoning about proofs in that logic. A key
aspect of our proposal is to use a fragment of linear logic called
Lolli (invented by Hodas and Miller) to model natural semantics style
descriptions.  Being based on linear logic, Lolli can provide logical
expression to resources such as state.  Lolli additionally possesses
proof-theoretic properties that allow it to encode natural semantics
style descriptions in such a way that proofs in Lolli mimic the
structure of derivations based on the natural semantics rules.  We
will discuss these properties of Lolli and demonstrate how they can be
exploited in modeling the semantics of imperative programs and in
reasoning about such models.
\end{abstract}

\section{Introduction}
\label{sec:introduction}

This paper concerns an approach to reasoning about the properties of
imperative programs.
Such programs, written in languages like Java and C, play an important
role in safety- and security critical systems.
They are pervasive, for example, in the software contained in
medical devices and financial systems.
Programs that malfunction in such contexts can lead to catastrophic
system behavior.
The underlying motivation for this work is that through the process of
formal reasoning we can establish the absence of such bugs before
these programs are run and thereby preclude undesirable behavior
after their deployment.

Our objective in this work is not to reason about properties of
particular programs but, rather, to develop a broad
framework within which such reasoning may be conducted.
An important ingredient of such a framework is a logic for
describing the semantics of the programming language in which programs
are constructed; a formalization of the semantics can then be combined
with the description of a given program to model its overall
behavior.
An aspect that needs special treatment when dealing with
imperative programs in this setting is the notion of state: imperative
programs typically manipulate memory by storing and looking up values
in relevant cells and how exactly they do this is important to
understanding their behavior.
Thus, the logic that we choose for our framework must facilitate the
description as well as the analysis of the role of
state in computations.

In constructing the framework we desire, we must also choose an
approach to presenting the semantics of a programming language.
We propose in this work to use the natural semantics style introduced
by Kahn~\cite{kahn87stacs} for this purpose.
Natural semantics style allows the meaning of a programming language
construct to be modeled via derivations that closely reflect the
actual computations that result from the construct.
Thus, the process of reasoning about program behavior boils down
naturally to reasoning about natural semantics style derivations.
In our framework, programming language semantics will be modeled by
translating these natural semantics descriptions into the underlying
logic.
This actually places two further constraints on the logic.
First, it should have a structure that supports a natural encoding of
natural semantics style descriptions.
Second, the inference process in the logic should correspond
transparently to the process of constructing natural semantics style
derivations; this property allows reasoning about natural semantics style
derivations to be reduced uniformly to reasoning about proofs in the
logic.

The main thrust of the work in this paper is to identify a logic that
satisfies the constraints described above and that would thereby be a
suitable choice for encoding programs and programming language
semantics within the framework we seek to design.
We contend that linear logic, a logic of resources and
actions invented by Jean-Yves Girard~\cite{girard87tcs}, provides a
natural means for treating state-based aspects of computation and
hence constitutes a good starting point.
However, the logic we use needs also to allow for a treatment of 
natural semantics style descriptions.
We argue that Lolli, a fragment of linear logic identified by Hodas
and Miller \cite{hodas91lics}, has such a character.
To provide substance to our claim, we demonstrate how this logic can
be used to model the semantics of a small collection of constructs in
an imperative programming language.
We further show how the meta-theoretic properties of Lolli allow us to
translate an informal style of reasoning based on natural semantics
derivations into reasoning about derivations in Lolli.
Although a formalization of this reasoning process is beyond the scope
of this work, we believe that this can be done following the approach
used in the Abella system~\cite{gacek08ijcar}.

The rest of this paper is structured as follows.
In the next section, we describe a simple imperative programming
language and we present the meanings of the constructs in this
language in natural semantics style.
We then consider a small program in this language and show how we can
organize the process of reasoning about it around natural semantics style
derivations.
This language and reasoning example provide us the means for
explaining and defending our main contributions in the sections that
follow.
In Section \ref{sec:speclogic} we present Lolli and we discuss the
properties of derivations in it.
In the following section we show how Lolli can be used to formalize
the imperative programming language described earlier.
In Section~\ref{sec:reasoning}, we demonstrate how the properties of
Lolli can be used in reasoning.
In particular, we show that the informal reasoning process based on
the natural semantics style presentation translates naturally into
reasoning about derivations in Lolli.
We conclude the paper in Section~\ref{sec:conclusion} with
indications of directions in this work that may be worthwhile to
pursue in the future.

\section{A Simple Imperative Programming Language}
\label{sec:imp}

In this section we define an imperative programming language and its
evaluation semantics.  We use these definitions to demonstrate how a
program written in the language can be reasoned about informally.

\subsection{The Language}
\label{subsec:evalsems}

The syntax of programs in our imperative programming language ${\cal
  L}$ is given by the following rules:
\begin{align*}
R\mathrm{::=}&\;\;\naturals\;|\; R+R \;|\; R-R \;|\; R>R \;|\; \deref{R}  \\
&|\; R\assign{}R \;|\; R;R \;|\; (R) \;|\; while\; R \; do\; R\\
\end{align*}
In this definition, the symbol $\naturals$ represents the category of
expressions corresponding to the non-negative
integers and the programs $\deref{}R,R_1\assign{}R_2,$ and $R_1;R_2$
correspond to memory lookup (like in \texttt{C}), update of the
value stored at $R_2$ to $R_1$, and evaluation of $R_1$ followed by
evaluation of $R_2$, respectively. The last construct included by the
syntax rules permits indefinite iteration in the language.

We will need a model of memory in order to present the semantics of
the constructs in our language. Towards this end, we will represent
memory as a partial function from natural numbers (denoted also in an
overloaded fashion by $\naturals$) to expressions in the language that
correspond to natural numbers.
Given memory $M$, we will use the notation $\memoryupdate{M}{x}{y}$ to
denote a modified memory given by the following partial function:
\begin{equation*}
\memoryupdate{M}{x}{y}(a) =
\left\{
\begin{array}{ll}
  y  & \mbox{if } x = a  \\
  M(a) & \mbox{if } x \neq a
\end{array}
\right.
\end{equation*}
Notice that although memory is modelled as a function from the
conceptual domain of natural numbers, we will often want to do a
``lookup'' using the result of a computation. By an abuse of notation,
we will allow memory to be ``applied'' to the expressions that denote
natural numbers in the language.

We present the semantics of the constructs in our language by
explaining what it means to evaluate them. We do this by defining an
``evaluation relation'' that we write as
\[ \evalrelation{R}{M}{N}{M'}.  \] This relation is to be read as
``the program $R$ evaluated in memory $M$ returns value $N$ and
modifies the memory to $M'$'' or, when memory is not of interest,
``$R$ evaluates to value $N$.'' When referring to components of this
relation we may refer to the ``input program expression'', the ``input
memory'', the ``return value'', and the ``output memory'',
respectively.  We define this relation through rules in the natural
semantics style that are presented in Figure \ref{fig:sosimp}. For the
uninitiated reader, each of these rules is to be read as asserting
that the relation shown below the line holds if all the relations or
properties above the line hold; the former is called the conclusion of
the rule and the latter are called its premises. Notice that if a rule
has no premises then its conclusion is unconditionally true.

\begin{figure*}
\begin{framed}
\begin{center}

\threeproofs
   {
   \sosruleone{}{$\evalrelation{N}{M}{N}{M}$}}
   {\sosbinop{+}}{\sosbinop{-}}

\twoproofs
   {\sosgreaterthan{$N_1>N_2$}{$1$}}
   {\sosgreaterthan{$N_1\leq N_2$}{$0$}}

\twoproofs
   {\sosruletwo{$\evalrelation{R}{M}{N}{M'}$}{$M'(N)=N'$}
               {$\evalrelation{\deref{R}}{M}{N'}{M'}$}}
   {\sosrulethree{$\evalrelation{R_1}{M}{N_1}{M'}$}
                 {$\evalrelation{R_2}{M'}{N_2}{M''}$}
                 {$M''(N_1)=N_3$}
                 {$\evalrelation{R_1\assign{}R_2}{M}{N_2}
                                {\memoryupdate{M''}{N_1}{N_2}}$}}

\oneproof
   {\sosruletwo{$\evalrelation{R_1}{M}{N_1}{M'}$}
               {$\evalrelation{R_2}{M'}{N_2}{M''}$}
               {$\evalrelation{R_1;R_2}{M}{N_2}{M''}$}}

\twoproofs
   {\sosruleone{$\evalrelation{R_1}{M}{0}{M'}$}
               {$\evalrelation{\soswhile{R_1}{R_2}}{M}{0}{M'}$}}
   {\sosrulefour{$\evalrelation{R_1}{M}{N_1}{M'}$}
                {$\evalrelation{R_2}{M'}{N_2}{M''}$}
                {$\evalrelation{\soswhile{R_1}{R_2}}{M''}{N_3}{M'''}$}
                {$N_1>0$}
                {$\evalrelation{\soswhile{R_1}{R_2}}{M}{N_3}{M'''}$}}
\end{center}
\caption{
 Evaluation semantics for the imperative language $\cal L$
}
\label{fig:sosimp}
\end{framed}
\end{figure*}

A few comments are in order with regard to the rules in
Figure~\ref{fig:sosimp}. First, these rules are meant to be
read as schemata: actual rules are to be generated by instantiating
the  schema variables $N,N_1,N_2,$ and $N_3$ by (expressions denoting)
numbers in $\naturals$, $R, R_1,$ and $ R_2$ by programs in $\cal L$,
and $M, M', M'',$ and $M'''$ by memory. Second, in keeping with the
systematic confusion of natural numbers with their representation in
$\cal L$, we have also overloaded the operators $+$, $-$, $>$, $\leq$,
and $\in$. Note also that with $-$ we associate the usual subtraction
operation on natural numbers: $N_1 - N_2$ is $0$ if $N_1$ is less than
$N_2$. Finally, these rules make precise the interpretation that we
would naturally think of associating with each of the constructs in
the language. In this regard, the first five rules need no further
explanation. The sixth and seventh rules encode the meaning of memory
lookup and update, respectively: $\deref{R}$ causes $R$ to be
evaluated and the memory to be looked up at the resulting location
while leaving the memory unchanged, whereas $R_1\assign{}R_2$ causes
memory to be changed at the location corresponding to $R_1$ by the
value corresponding to $R_2$.
Notice also that the rightmost premise of the memory update rule
ensures that the domain of memory remains fixed throughout evaluation.
The last three rules make precise the meaning of sequencing and
of $while$ as an iteration construct.

When building derivations, we may build derivations for premises in
any order provided the constrains between premises are met. However,
we may significantly simplify the process of proof construction if we
build them sequentially from the left most premise to the right most
premise. Observe that adopting such a derivation building strategy
does not limit the derivations that can be built.
\subsection{Derivations as Computations}
\label{subsec:derivations}

The rules defining the evaluation semantics provide us a means for
constructing derivations of particular evaluation relations. Such
derivations can be understood as an abstract view of the computation
that results from particular programs. For example, suppose we are given a
particular program $R$ and a starting memory $M$ and we desire to
understand what value this program computes and what impact it has on
memory. In this case, would pick two ``meta variables'' $N$ and $M'$
and we attempt to construct a derivation for the evaluation relation
\[ \evalrelation{R}{M}{N}{M'} \]
with the proviso that we may instantiate $N$ and $M'$ as needed along
the way. Note also that the result of a computation must in fact be
validated by success in constructing such a derivation. Thus, by
analyzing all the possible derivations we also obtain a means for
establishing properties of computations.

To illustrate the connection between derivations and computations in
this setting, let us consider the program
\[\ensuremath{2 \assign \deref{0} ; (0 \assign \deref{1} ; 1 \assign \deref{2})} \]
and its evaluation in some memory
$M$ defined at locations $0,1,$ and $2$.
This program swaps the values stored at two locations using a third
location as temporary storage.
We will build a derivation piecemeal, showing that for some $N$, the
evaluation relation

\begin{tabbing}
\qquad\=\qquad\=\kill
\>$\langle \ensuremath{2 \assign \deref{0} ; (0 \assign \deref{1} ; 1 \assign \deref{2})}, M \rangle \leadsto$\\
\>\>$(N,
\memoryupdate{\memoryupdate{\memoryupdate{M}{2}{M(0)}}{0}{M(1)}}{1}{M(0)})$
\end{tabbing}
  holds.

Let $\Chi$ be the following derivation for
$\evalrelation{\ensuremath{2 \assign \deref{0}}}{M}{M(0)}{M'}$ where
\ensuremath{M'=\memoryupdate{M}{2}{M(0)}}: 
\begin{customprooftree}
          \AxiomC{}
        \UnaryInfC{$\evalrelation{2}{M}{2}{M}$}
          \AxiomC{}
      \UnaryInfC{$\evalrelation{0}{M}{0}{M}$}
    \UnaryInfC{$\evalrelation{\ensuremath{2 \assign \deref{0}}}{M}{M(0)}{M}$}
      \AxiomC{}
    \UnaryInfC{$M(2) = N_1$}
  \TrinaryInfC{$\evalrelation{\ensuremath{2 \assign \deref{0}}}{M}{M(0)}{M'}$}
\end{customprooftree}
Let $\Psi$ be the following derivation for
$\evalrelation{\ensuremath{0 \assign \deref{1}}}{M'}{M'(1)}{M''}$ where
\ensuremath{M''=\memoryupdate{M'}{0}{M'(1)}}: 
\begin{customprooftree}
        \AxiomC{}
      \UnaryInfC{$\evalrelation{0}{M'}{0}{M'}$}
          \AxiomC{}
        \UnaryInfC{$\evalrelation{1}{M'}{1}{M'}$}
      \UnaryInfC{$\evalrelation{\ensuremath{\deref{1}}}{M'}{M'(1)}{M'}$}
        \AxiomC{}
      \UnaryInfC{$M'(0) = N_2$}
    \TrinaryInfC{$\evalrelation{\ensuremath{0 \assign \deref{1}}}{M'}{M'(1)}{M''}$}
\end{customprooftree}
Finally, let $\Omega$ be the following derivation for
$\evalrelation{\ensuremath{1 \assign \deref{2}}}{M''}{M''(2)}{M'''}$ where
\ensuremath{M'''=\memoryupdate{M''}{1}{M''(2)}}: 
\begin{customprooftree}
        \AxiomC{}
      \UnaryInfC{$\evalrelation{1}{M''}{1}{M''}$}
           \AxiomC{}
        \UnaryInfC{$\evalrelation{2}{M''}{2}{M''}$}
      \UnaryInfC{$\evalrelation{\ensuremath{\deref{2}}}{M''}{M''(2)}{M''}$}
        \AxiomC{}
      \UnaryInfC{$M''(1) = N_3$}
    \TrinaryInfC{$\evalrelation{\ensuremath{1 \assign \deref{2}}}{M''}{M''(2)}{M'''}$}
\end{customprooftree}
Then we can combined $\Chi,\Psi$ and $\Omega$ to obtain the complete
derivation that is shown below for the complete program expression of
interest: 
\begin{customprooftree}
  \AxiomC{$\Chi$}
    \AxiomC{$\Psi$}
    \AxiomC{$\Omega$}
  \BinaryInfC{$\evalrelation{\ensuremath{0 \assign \deref{1} ; 1 \assign \deref{2}}}{M'}{M''(2)}{M''}$}
\BinaryInfC{$\evalrelation{\ensuremath{2 \assign \deref{0} ; (0 \assign \deref{1} ; 1 \assign \deref{2})}}{M}{M''(2)}{M'''}$}
\end{customprooftree}
To arrive at the desired conclusion, we have to show that $M'''$, the
memory at the end of the computation, is equivalent to
\[\memoryupdate{\memoryupdate{\memoryupdate{M}{2}{M(0)}}{0}{M(1)}}{1}{M(0)}.\]
Substituting the definition of $M'$ in the definition of $M''$ yields
\[ \memoryupdate{\memoryupdate{M}{2}{M(0)}}{0}{\memoryupdate{M}{2}{M(0)}(1)} .\]
By observing that 
\[ \memoryupdate{M}{2}{M(0)}(1) = M(1) \]
we have
\[ M'' = \memoryupdate{\memoryupdate{M}{2}{M(0)}}{0}{M(1)}.\]
Replacing this result for $M''$ in the definition of $M'''$ we get
\begin{tabbing} \qquad\=\qquad\=\kill
\>$\memoryupdate{\memoryupdate{M}{2}{M(0)}}{0}{M(1)}[1\mapsto$\\ 
\>\>$\memoryupdate{\memoryupdate{M}{2}{M(0)}}{0}{M(1)}(2)]$
\end{tabbing} 
Finally, by observing that 
\[ \memoryupdate{\memoryupdate{M}{2}{M(0)}}{0}{M(1)}(2)= M(0) \] 
we arrive at the conclusion we want:
\[M''' = \memoryupdate{\memoryupdate{\memoryupdate{M}{2}{M(0)}}{0}{M(1)}}{1}{M(0)}.\]

\subsection{Informal Reasoning about Imperative Programs}
\label{subsec:sosreasoning}

As we have explained earlier, we can extract information about the
behavior of a program by analyzing the derivations that result from
it. We illustrate this possibility in this subsection by showing how
to demonstrate the correctness of a program for calculating the sum
of the integers from $0$ to a particular number $N$. Our argument at
this stage will be informal; later sections will discuss a framework
for formalizing this style of argument.

Let $U$ be the following program:
\begin{equation}\label{eqn:sum_loop}
\ensuremath{\soswhile{(\deref{1}) \hat{>} 0}{0 \assign \deref{0} \hat{+} \deref{1} ; 1 \assign \deref{1} \hat{-} 1}}
\end{equation}

Consider the program $V$ written to calculate the value of $\sum\limits_{i=0}^N i$ constructed with $U$:
\begin{equation}\label{eqn:sum}
\ensuremath{0 \assign 0};(\ensuremath{1 \assign N};U)
\end{equation}

We will show that given any $N$ and any memory defined at $0$ and $1$, $V$ calculates the correct answer and stores it in memory.

\begin{lem}[Total Correctness of $U$ using structural operational semantics]\label{lem:sumcorrectnesssoslemma}
$\forall N_1, N_2, M$ if $N_1,N_2\in\naturals$ and $M$ is memory where $M(0) = N_2$ and $M(1) = N_1$ then $\exists M'$ such that $\evalrelation{U}{M}{0}{M'}$ and $M'(0) = N_2+\sum\limits_{i =  0}^N i$
\end{lem}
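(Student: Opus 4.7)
The plan is to proceed by induction on $N_1$, the value initially stored at location $1$, building a concrete derivation in the natural semantics at each stage. Throughout, I will read the statement's $\sum_{i=0}^{N} i$ as $\sum_{i=0}^{N_1} i$, which is what the setup demands.

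For the base case $N_1 = 0$, the guard $\deref{1} \mathbin{\hat{>}} 0$ evaluates in $M$ to $0$ (since $M(1) = 0$) while leaving memory unchanged. The first of the two $while$ rules in Figure~\ref{fig:sosimp} then applies immediately, giving $\evalrelation{U}{M}{0}{M}$. Choosing $M' = M$, we get $M'(0) = N_2 = N_2 + \sum_{i=0}^{0} i$, which is the desired equality.

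For the inductive step, assume the result for all smaller values of the first parameter, and let $N_1 > 0$. Now the guard evaluates to $1$, so the second $while$ rule must be used, and this requires three sub-derivations: a derivation of the guard test, a derivation of the body $0 \assign \deref{0}\mathbin{\hat{+}}\deref{1} \; ; \; 1 \assign \deref{1}\mathbin{\hat{-}}1$, and a recursive derivation of $U$ on the memory resulting from the body. Unfolding the body using the sequencing, assignment, binary-operator, and dereference rules shows that it produces the memory $M_1 := \memoryupdate{\memoryupdate{M}{0}{N_2 + N_1}}{1}{N_1 - 1}$, still defined at $0$ and $1$ with $M_1(0) = N_2 + N_1$ and $M_1(1) = N_1 - 1$. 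Applying the induction hypothesis to $M_1$, with its parameters shifted to $N_1 - 1$ and $N_2 + N_1$, yields some $M'$ with $\evalrelation{U}{M_1}{0}{M'}$ and
\[ M'(0) \;=\; (N_2 + N_1) + \sum_{i=0}^{N_1-1} i \;=\; N_2 + \sum_{i=0}^{N_1} i. \]
Composing the guard, body, and recursive sub-derivations under the second $while$ rule then yields $\evalrelation{U}{M}{0}{M'}$, completing the step.

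The proof itself is largely book-keeping: naming the intermediate memories produced inside the body and checking, at each dereference, what value the current memory assigns to locations $0$ and $1$. The main obstacle I anticipate is threading the memory through the body in exactly the order mandated by the evaluation rules, so that the addend $\deref{1}$ used when writing to location $0$ is the \emph{old} value $N_1$ and, crucially, location $1$ still holds $N_1$ at the moment it is read again for the subtraction $\deref{1}\mathbin{\hat{-}}1$. Once this threading is made explicit via the rule for assignment (which outputs $\memoryupdate{M''}{N_1}{N_2}$ only after both sub-expressions have been evaluated), the remaining arithmetic and the recursive application of the hypothesis are routine.
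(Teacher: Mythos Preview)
Your proposal is correct and follows essentially the same approach as the paper: induction on $N_1$, with the base case closing via the first $while$ rule and the inductive step unfolding one iteration of the body before invoking the hypothesis on the updated memory. If anything, your version is slightly more complete, since you explicitly carry out the arithmetic $(N_2+N_1)+\sum_{i=0}^{N_1-1} i = N_2+\sum_{i=0}^{N_1} i$ that the paper's proof leaves implicit.
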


\begin{proof} [ Proof of Lemma \ref{lem:sumcorrectnesssoslemma} ]

This will be proven by induction on $N_1$.
If $N_1 = 0$ then the following derivation can be constructed and $M'(0) = N_2 = N_2+\sum\limits_{i =  0}^0 i$:
\begin{customprooftree}
        \AxiomC{}
      \UnaryInfC{$\evalrelation{1}{M}{1}{M}$}
    \UnaryInfC{$\evalrelation{\ensuremath{\deref{1}}}{M}{0}{M}$}
      \AxiomC{}
    \UnaryInfC{$\evalrelation{0}{M}{0}{M}$}
      \AxiomC{}
    \UnaryInfC{$0\leq 0$}
  \TrinaryInfC{$\evalrelation{\ensuremath{(\deref{1}) \hat{>} 0}}{M}{0}{M}$}
\UnaryInfC{$\evalrelation{\ensuremath{\soswhile{(\deref{1}) \hat{>} 0}{U}}}{M}{N}{M}$}
\end{customprooftree}

If $M(1) = N_1$ and it is assumed this lemma holds for all memory $W$ where $W(1)<N_1$ then the following derivation can be constructed:
\begin{customprooftree}
    \AxiomC{$\Chi$}
    \AxiomC{}
  \UnaryInfC{$1>0$}
    \AxiomC{$\Psi$}
    \AxiomC{$\Omega$}
  \noLine
  \UnaryInfC{$\evalrelation{\ensuremath{\soswhile{(\deref{1}) \hat{>} 0}{U}}}{M''}{0}{M'''}$}
\QuaternaryInfC{$\evalrelation{\ensuremath{\soswhile{(\deref{1}) \hat{>} 0}{U}}}{M}{0}{M'''}$}
\end{customprooftree}
In this derivation, we let $M'=\memoryupdate{M}{0}{M(0)+M(1)}$, $M''=\memoryupdate{M'}{1}{M(1)-1}$, and $M'''$ is the result memory from our inductive hypothesis.
Let $\Chi$ be a derivation with an end-sequent of $\evalrelation{\ensuremath{(\deref{1}) \hat{>} 0}}{M}{1}{M}$ and $\Psi$ be a derivation with an end-sequent of $\evalrelation{\ensuremath{0 \assign \deref{0} \hat{+} \deref{1} ; 1 \assign \deref{1} \hat{-} 1}}{M}{M(1)-1}{M''}$.
Both of these derivations can be constructed but are omitted; they are uninteresting with respect to this case.
Since $M'(1) < N_1$ the inductive hypothesis can be used to give a derivation for $\Omega$.
\end{proof}

From Lemma \ref{lem:sumcorrectnesssoslemma}, the following theorem is easily shown:
\begin{thm}[Total Correctness of $V$ using structural operational semantics]\label{thm:sumcorrectnesssostheorem}
$\forall N, M$ if $N\in\naturals$ and $M$ is memory defined at $M(0)$ and $M(1)$ then $\exists M'$ such that $\evalrelation{V}{M}{0}{M'}$ and $M'(0) = \sum\limits_{i= 0}^N i$.
\end{thm}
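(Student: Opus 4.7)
The plan is to reduce the theorem directly to Lemma~\ref{lem:sumcorrectnesssoslemma} by constructing a derivation for $V$ that threads the initial two assignments so that, by the time control reaches $U$, the memory satisfies the hypotheses of the lemma. Concretely, given any $N \in \naturals$ and memory $M$ defined at $0$ and $1$, I will let $M_1 = \memoryupdate{M}{0}{0}$ and $M_2 = \memoryupdate{M_1}{1}{N}$, and show $\evalrelation{V}{M}{0}{M'}$ for the memory $M'$ produced by the lemma applied to $M_2$.

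First I would build a small derivation $\Xi_1$ witnessing $\evalrelation{0 \assign 0}{M}{0}{M_1}$ using the assignment rule from Figure~\ref{fig:sosimp}: its two evaluation premises are discharged by the axiom for numeric literals, and the rightmost premise $M(0) = N_1$ is satisfied by the assumption that $M$ is defined at $0$. Next, I would build $\Xi_2$ witnessing $\evalrelation{1 \assign N}{M_1}{N}{M_2}$ in the same fashion, using that $M_1$ remains defined at $1$ (memory update never shrinks the domain). Since $M_2(0) = 0$ and $M_2(1) = N$ and both are in $\naturals$, Lemma~\ref{lem:sumcorrectnesssoslemma} applied with $N_1 := N$ and $N_2 := 0$ yields a memory $M'$ together with a derivation $\Omega$ whose end-sequent is $\evalrelation{U}{M_2}{0}{M'}$, and such that
\[ M'(0) \;=\; 0 + \sum_{i=0}^{N} i \;=\; \sum_{i=0}^{N} i. \]

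Finally, I would assemble the full derivation by two applications of the sequencing rule: combining $\Xi_2$ and $\Omega$ gives $\evalrelation{1 \assign N ; U}{M_1}{0}{M'}$, and combining $\Xi_1$ with that derivation gives $\evalrelation{V}{M}{0}{M'}$, which is exactly the desired conclusion together with the stated equation for $M'(0)$.

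There is no genuinely hard step here; the entire inductive content of the argument has already been absorbed into Lemma~\ref{lem:sumcorrectnesssoslemma}. The only bookkeeping that requires care is tracking which memory is in effect at each premise of the sequencing and assignment rules, and verifying that the ``domain remains fixed'' side condition on each assignment is met---both of which follow immediately from the hypothesis that $M$ is defined at $0$ and $1$ and from the fact that neither update removes an address from the domain.
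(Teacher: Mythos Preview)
Your proposal is correct and follows essentially the same approach as the paper: reduce the evaluation of $V$ to an evaluation of $U$ in the memory $\memoryupdate{(\memoryupdate{M}{0}{0})}{1}{N}$ and then invoke Lemma~\ref{lem:sumcorrectnesssoslemma} with $N_1 = N$, $N_2 = 0$. Your version is in fact more explicit than the paper's---you spell out the sub-derivations $\Xi_1,\Xi_2$ for the two initial assignments and the two applications of the sequencing rule, and you check the domain side-conditions, whereas the paper compresses all of this into a single sentence of ``case analysis on the derivation.''
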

\begin{proof}[ Proof of Theorem \ref{thm:sumcorrectnesssostheorem}]

By case analysis on the derivation for $\evalrelation{V}{M}{0}{M'}$ it suffices to show there is a derivation for $\evalrelation{U}{\memoryupdate{(\memoryupdate{M}{0}{0})}{1}{N}}{0}{M'}$ where $M'(0)=\sum\limits_{i = 0}^N i$.
This is shown using Lemma \ref{lem:sumcorrectnesssoslemma}.

\end{proof}

\section{The Specification Logic}
\label{sec:speclogic}

In this section we present Lolli, the fragment of linear logic that we
will use to formalize our imperative programming language. The first
subsection introduces the language of Lolli and clarifies the meaning
of its logical symbols through inference rules. This part of our
presentation emphasizes the declarative nature of Lolli. When we use
it to model natural semantics style descriptions, we would also like
to be able to capture the structure of natural semantics style
derivations. Towards this end, we show in the second subsection the
relative completeness of goal-directed reasoning in Lolli. This
discussion culminates in a reduced proof system for Lolli that we use
exclusively in the rest of the paper.

\subsection{The logic Lolli}
\label{subsec:lolli}

Lolli is a logic that is built on the simply typed $\lambda$-calculus
of Church~\cite{church40}.
The types underlying its language are constructed from a collection of
primitive types that contain $\omicron$, the type of propositions,
and at least one other type; for the moment, we assume $\iota$ to be
the only such type, but we will add to this collection as needed in
later sections.
The remaining types build on these primitive types using the function
type constructor: if $\tau_1$ and $\tau_2$ are types, then
$\tau_1 \rightarrow \tau_2$ is also a type and it denotes the
collection of functions from $\tau_1$ to $\tau_2$.

The terms of Lolli are constructed from collections of typed variables
and constants using the usual abstraction and application operations:
the former yields the term $\lambda x. t$ of type $\tau_1 \rightarrow
\tau_2$ given the term $t$ of type $\tau_2$ and the variable $x$ of
type $\tau_1$, and the latter yields the term $(t_1\app t_2)$ of type
$\tau_2$ given terms $t_1$ and $t_2$ of types $\tau_1 \rightarrow
\tau_2$ and $\tau_1$ respectively. Abstraction is a binding operation
that defines a scope for the variable, a concept that we will assume
the reader to be familiar with. Two terms are considered to be equal
if one can be obtained from the other by some sequence of
$\alpha$-conversions, i.e. the replacement of a subpart of the form
$\lambda x. t$ by $\lambda y.t'$ provided $x$ and $y$ are variables of
the same type, $y$ does not appear free in $t$ and $t'$ results from
$t$ by the replacement of the free occurrences of $x$ by $y$. Given a
term $s$ of the same type as $x$, we will write $t[s/x]$ to denote the
result of substituting $s$ for the free occurrences of $x$ in $t$ in a
capture avoiding way; notice that in correctly carrying out such a
substitution, we may need to apply some $\alpha$-conversions. A term
$t$ is said to be obtained by $\beta$-contraction from another term
$s$ if it results from replacing a subterm of $s$ that has the form
$((\lambda x.t_1)\app t_2)$ by $t_2[t_1/x]$. Two terms are also
considered equal if one can be obtained from the other by some
sequence of applications of $\beta$-contractions or its inverse. We
will use this notion of equality implicitly in the rest of this
paper. In the context of the simply typed $\lambda$-calculus, it is
known that every term has a normal form modulo $\beta$-contractions,
i.e. it is equal to a term which does not contain a subterm of the
form $((\lambda x.t_1)\app t_2)$. We will depict terms solely by their
normal forms.

Lolli has a set of constants that serve to build a logic over its
terms. These constants, referred to as {\em logical constants} consist
of the following: $\AMPCON$, $\LI$, $\I$, $\OT$, and $\OP$ all of type
$\omicron \rightarrow (\omicron \rightarrow \omicron)$ and written
in infix form; $!$ of type $\omicron \rightarrow \omicron$;
and, for each type $\tau$, the constants $\A_\tau$ and $\E_\tau$ with
type $(\tau\rightarrow\omicron)\rightarrow\omicron$.
The constants $\A_\tau$ and $\E_\tau$ are referred to as quantifiers
and the remaining constants constitute the logical connectives.
In addition to these constants, expressions in Lolli may also be
formed from user defined constants, referred to as {\em nonlogical
  constants}. The well-formed terms of type $\omicron$ in Lolli are
distinguished as {\em formulas}. Notice that a formula may have as its
top-level symbol a logical constant, a variable or a nonlogical
constant. In the latter two cases, the formula is said to be {\em
  atomic}. Further, it is a rigid atom if its top-level symbol is a
nonlogical constant. We shall use the syntactic variable $A$ to denote
atomic formulas and $A_r$ to denote rigid atoms.

At a logical level, Lolli is oriented towards proving judgments
represented by {\em sequents}. Formally, a sequent is an object of the
form
\[\sequent{\Gamma}{\Delta}{G} \]
where $\Gamma$ is a set of formulas, $\Delta$ is a multiset of
formulas and $G$ is a formula. Intuitively, such a sequent corresponds
to the claim that $G$, the {\it goal formula}, is derivable given the
resources $\Gamma$ and $\Delta$. The resources in $\Gamma$ are
distinguished as being {\it unbounded}: formulas in $\Gamma$ would
typically be used to represent unchanging facts in a specification
setting, such as the natural semantics rules governing the behavior of
imperative programs. On the other hand, formulas in $\Delta$
constitute {\em bounded} resources: referring again to the imperative
programming example, they may be used to represent the state of memory
at a particular point in computation.

The syntax of formulas that may be used as resources and goals is
limited in Lolli. Specifically, they may only be the $P$ and $G$
formulas described by the syntax rules below:
\begin{equation}\label{eqn:lollisyntax}
\begin{split}
P::=&\;\; A_r \;|\; P \AMPOP P \;|\; G \LI P \;|\; G \I P \;|\; \A x . P\\
G::=&\;\; \top \;|\; A \;|\; G \AMPOP G \;|\; P \LI G \;|\; P \I G \;|\; \A x . G\\
&|\;\E x. G \;|\; !G \;|\; G \OT G \;|\; G \OP G
\end{split}
\end{equation}
We refer to $P$ formulas also as \emph{program clause formulas}.
Notice that the connectives $\AMPCON$, $\LI$, $\I$, and $\A$ are allowed in
both kinds of formulas. However, there are differing constraints in
the use of $\LI$ and $\I$. When these are used in the resource
formulas, the formula on the left must be a goal formula and that on
the right must be a resource formula. When they are used in a goal
formula on the other hand, the formula on the left must be a resource
formula and that on the right must be a goal formula. As we shall see
presently, these restrictions play an important role in maintaining
the structure of sequents in the course of a derivation and therefore
in the coherence of the inference rules for Lolli. In addition to the
already mentioned connectives, goal formulas may contain $\top$, $\E$,
$!$, $\OT$, and $\OP$.

The rules for deriving sequents in Lolli are presented in Figure
\ref{fig:lollirules}.
The sequent that appears below the line in each of these rules is
called its conclusion and the sequents that appear above the line
constitute its premises.
The $L$ or $R$ in the labels of these rules denotes whether the rule
introduces a logical symbol on the left or the right of the $\turnstile$.
Grouped by $L$ or $R$ they may be referred to as left-introduction
rules and right-introduction rules, respectively.
In the rules pertaining to the logical symbols, the formula in the
conclusion that contains the introduced symbol is called the {\it
  principal formula}. This terminology is extended to the $id$ rule
and $absorb$ rules to denote the formulas represented by $A$ and $B$,
respectively. When we write $\Gamma, F$ in the unbounded context in
these rules, we mean it to denote $\Gamma \cup \{F\}$, i.e. $F$ may
also be contained in $\Gamma$. On the other hand, in the unbounded context
$\Delta, F$ represents $\Delta \uplus \{F\}$, i.e. $\Delta$
constitutes the bounded resources with the exclusion of the selected
copy of the formula $F$. Relatedly, $\Delta_1, \Delta_2$ in such a
context stands for $\Delta_1 \uplus \Delta_2$, i.e., the comma
represents multiset union.

\begin{figure*}
\begin{framed}
\begin{customprooftree}

  \AxiomC{}
  \RightLabel{$id$}
  \UnaryInfC{$\Gamma ; A \SQ A$}

  \AxiomC{$\Gamma,B; \Delta,B \SQ G$}
  \RightLabel{$absorb$}
  \UnaryInfC{$\Gamma,B; \Delta \SQ G$}

  \AxiomC{}
  \RightLabel{$\top R$}
  \UnaryInfC{$\Gamma ; \Delta \SQ \top$}

  \AxiomC{$\Gamma;\Delta,B_i\SQ G$}
  \RightLabel{$\&L(i\in\{1,2\})$}
  \UnaryInfC{$\Gamma; \Delta,B_1\& B_2\SQ G$}

  \AxiomC{$\Gamma; \Delta\SQ G_1$}
  \AxiomC{$\Gamma; \Delta\SQ G_2$}
  \RightLabel{$\&R$}
  \BinaryInfC{$\Gamma; \Delta \SQ G_1\&G_2$}

\alwaysNoLine
\QuinaryInfC{}

\end{customprooftree}

\begin{customprooftree}

  \AxiomC{$\Gamma; \Delta_1\SQ B_1$}
  \AxiomC{$\Gamma; \Delta_2,B_2\SQ G$}
  \RightLabel{$\LI L$}
  \BinaryInfC{$\Gamma;\Delta_1,\Delta_2,B_1\LI B_2\SQ G$}

  \AxiomC{$\Gamma; \Delta,G_1\SQ G_2$}
  \RightLabel{$\LI R$}
  \UnaryInfC{$\Gamma; \Delta \SQ G_1\LI G_2$}

  \AxiomC{$\Gamma; \emptyset \SQ B_1$}
  \AxiomC{$\Gamma; \Delta,B_2\SQ G$}
  \RightLabel{$\I L$}
  \BinaryInfC{$\Gamma;\Delta,B_1\I B_2\SQ G$}

  \AxiomC{$\Gamma,G_1; \Delta \SQ G_2$}
  \RightLabel{$\I R$}
  \UnaryInfC{$\Gamma; \Delta \SQ G_1\I G_2$}

\alwaysNoLine
\QuaternaryInfC{}

\end{customprooftree}

\begin{customprooftree}

  \AxiomC{$\Gamma;\Delta, (B\app t)\SQ G$}
  \RightLabel{$\A L$}
  \UnaryInfC{$\Gamma;\Delta,\A x . B\SQ G$}

  \AxiomC{$\Gamma;\Delta\SQ G c$}
  \RightLabel{$\A R$}
  \UnaryInfC{$\Gamma;\Delta\SQ \A x . G$}

  \AxiomC{$\Gamma;\Delta \SQ (G\app t)$}
  \RightLabel{$\E R$}
  \UnaryInfC{$\Gamma;\Delta \SQ \E x . G$}

\alwaysNoLine
\TrinaryInfC{}

\end{customprooftree}

\begin{customprooftree}

  \AxiomC{$\Gamma; \emptyset\SQ G$}
  \RightLabel{$!R$}
  \UnaryInfC{$\Gamma; \emptyset \SQ !G$}

  \AxiomC{$\Gamma; \Delta \SQ G_i$}
  \RightLabel{$\OP R(i\in\{1,2\})$}
  \UnaryInfC{$\Gamma; \Delta \SQ G_1\OP G_2 $}

  \AxiomC{$\Gamma; \Delta_1 \SQ G_1$}
  \AxiomC{$\Gamma; \Delta_2 \SQ G_2$}
  \RightLabel{$\OT R$}
  \BinaryInfC{$\Gamma; \Delta_1,\Delta_2 \SQ G_1\OT G_2$}

\alwaysNoLine
\TrinaryInfC{}

\end{customprooftree}

\caption{
The inference rules in Lolli. In the $\A R$ rule, $c$ must not occur
in $\Gamma$, $\Delta$, or $G$. In the $\A L$ and $\E R$
rules, the term $t$
generalized upon must be such that $(B\app t)$ and $(G\app t)$ are a program
clause formula and a goal formula, respectively.
}
\label{fig:lollirules}
\end{framed}
\end{figure*}

Some comments on the inference rules are useful both in understanding
the logical structure of Lolli and the intended meaning of the logical
symbols.
The rules for the use of resource formulas in Lolli are all stated
with respect to the bounded context.
The only exception to this is the $absorb$ rule which encodes the
possibility of making a copy of an unbounded resource before using it
in a bounded fashion.
The rules for the quantifiers give them their usual interpretation
with the caveat that the domain of quantification is restricted so as
to preserve the normal form of sequents in Lolli.
The formula $G_1 \OT G_2$ is interpreted as saying that there
are enough resources to show both $G_1$ and $G_2$: the rule for
proving this formula requires each component to be shown from a
partitioning of the bounded resources.
The connectives $\AMPCON$ and $\OP$ are meant to encode different
kinds of choices. 
The formula $G_1 \AMPOP G_2$ signifies that the available
resources are sufficient to satisfy either $G_1$ or $G_2$, whichever
one we choose. 
Accordingly, to prove a sequent that has such a formula on the right
of $\vdash$, we have to show that we can prove sequents with the same
resources and each of $G_1$ and $G_2$ as the goal. 
On the other hand, if the formula $B_1 \AMPOP B_2$ is available as a
resource, this means that we can choose which one of the components we
actually want to use, something that underlies the left-introduction
rule for this connective. 
In contrast, the formula $G_1 \OP G_2$ means that we can have one of
$G_1$ or $G_2$ based on the resources, but we do not know {\em a
  priori} which. 
Correspondingly, to prove a sequent that has such a formula on the right
of $\vdash$, it suffices to prove a sequent with the same
resources and with one of $G_1$ or $G_2$ as the goal. 
The $\LI$ connective captures a notion of resource conversion: To show
$G_1\LI G_2$  we must somehow use $G_1$ in showing $G_2$ and,
conversely, when given $B_1 \LI B_2$, we may consume some of the
resources to show $B_1$ and then use $B_2$ itself as a resource.
The $\I$ connective also represents resource conversion, but this time
an unbounded resource.
Note that the rules for $\LI$ and $\I$ may move formulas from one side
of $\turnstile$ to the other and could potentially result in
destroying the form of permitted sequents in Lolli.
However, the restriction on what can appear on either side of $\LI$
and $\I$ in goal and program clause formulas ensures that this does
not happen.
The $!$ connective corresponds to treating its argument as being
independent of the finite resources.
The $id$ rule cements the fact that all the bounded resources must be
consumed in a derivation.
In this setting $\top$ corresponds to a ``sink'' or a garbage
collector for the bounded resources.

We illustrate the rules of Lolli by considering a few proofs that use
them. First, consider the sequent
\[\sequent{\emptyset}{\emptyset}{(A_1 \AMPOP A_2) \I (A_1 \OT A_2)}. \]
This sequent expresses the intuition that if we
have $A_1 \AMPOP A_2$ as an unbounded resource, then we must
simultaneously have both $A_1$ and $A_2$ provided our bounded
resources are empty. A derivation for the sequent is shown below.
\begin{customprooftree}\label{prooftree:nup}

    \AxiomC{}
    \RightLabel{$id$}
    \UnaryInfC{$A_1 \AMPOP A_2 ; A_2 \SQ A_2 $}

    \AxiomC{}
    \RightLabel{$id$}
    \UnaryInfC{$A_1 \AMPOP A_2 ; A_1 \SQ A_1 $}

  \RightLabel{$\OT R$}
  \BinaryInfC{$A_1 \AMPOP A_2 ; A_1, A_2 \SQ A_1 \OT A_2 $}
  \RightLabel{$\AMPCON L$}
  \UnaryInfC{$A_1 \AMPOP A_2 ; A_1, A_1 \AMPOP A_2 \SQ A_1 \OT A_2 $}
  \RightLabel{$\AMPCON L$}
  \UnaryInfC{$A_1 \AMPOP A_2 ; A_1 \AMPOP A_2, A_1 \AMPOP A_2 \SQ A_1 \OT A_2 $}
  \RightLabel{$absorb$}
  \UnaryInfC{$A_1 \AMPOP A_2 ; A_1 \AMPOP A_2 \SQ A_1 \OT A_2 $}
  \RightLabel{$absorb$}
  \UnaryInfC{$A_1 \AMPOP A_2 ; \emptyset \SQ A_1 \OT A_2 $}
  \RightLabel{$\I R$}
  \UnaryInfC{$\emptyset ; \emptyset \SQ (A_1 \AMPOP A_2) \I (A_1 \OT A_2)$}

\end{customprooftree}
This proof uses the $\AMPCON L$ and $absorb$ rules in a situation when the
formula on the right $\turnstile$ is $A_1\OT A_2$, i.e., is not atomic.
Such a proof is not goal-directed, i.e., if we think of the process of
searching for a proof for the given sequent, the formula on the right
of the $\turnstile$ symbol does not guide the choice of rule to use to
arrive at the conclusion. In the next subsection we will consider the
idea of uniform proofs that will provide us a means for restricting
attention to only goal-directed proofs.

Notice that the unbounded availability of $A_1 \AMPOP A_2$ is important to
the above proof: if we change the sequent to
\[\sequent{\emptyset}{\emptyset}{(A_1 \AMPOP A_2) \LI (A_1 \OT A_2)} \]
then it is no longer provable.
Given the formula $A_1 \AMPOP A_2$ as a bounded resource, we have to make
a choice between using with $A_1$ or $A_2$, also as a bounded resource.
This choice is mutually exclusive; we may not have both $A_1$ and
$A_2$. On the other hand, if $A_1 \OT A_2$ is on the right of
$\turnstile$, then both $A_1$ and $A_2$ must be available as (bounded)
resources for the sequent to be provable.

The process of finding proofs for sequents typically involves search.
Two common strategies that are used in this setting are \emph{forward
  chaining} and \emph{backward chaining}.
These strategies refer to how we use implicational formulas, which in Lolli
could be ones that have either $\LI$ or $\I$ as their top-level
connective, available in resources in guiding the search. In the
former case, we use the fact that the lefthand side of the implication
is already available as a resource and we then reason forward, by
adding the righthand side as a resource. In the latter case, we
observe that the goal formula of the sequent matches the righthand
side of the implication and then reduce the task to showing the
lefthand side from the available resources.
The following proof can be understood as the result of using a forward
chaining strategy to prove the sequent $\sequent{A_1}{A_1\LI A_2,
  A_2\LI A_3}{A_3}$.
\begin{customprooftree}

  \AxiomC{}
  \RightLabel{$id$}
  \UnaryInfC{$A_1;{A_1}\SQ A_1$}
  \RightLabel{$absorb$}
  \UnaryInfC{$A_1;\emptyset \SQ A_1$}

   \AxiomC{}
   \RightLabel{$id$}
   \UnaryInfC{$A_1;{A_2}\SQ A_2$}

   \AxiomC{}
   \RightLabel{$id$}
   \UnaryInfC{$A_1;{A_3}\SQ A_3$}

  \RightLabel{$\LI L$}
  \BinaryInfC{$A_1;A_2,{A_2\LI A_3}\SQ A_3$}

  \RightLabel{$\LI L$}
  \BinaryInfC{$A_1; {A_1\LI A_2}, A_2\LI A_3\SQ A_3$}

\end{customprooftree}

In Section \ref{subsec:speclogic}, we will consider a different proof
resulting from a backward chaining strategy for this sequent.

\subsection{A Reduced Proof System for Lolli}
\label{subsec:speclogic}

The derivation system for Lolli that we saw in the previous subsection
presents us with alternative ways to construct a proof. For example,
we may have the choice of using either a left or a right rule at a
particular point in proof. In modelling natural semantics style rules
for imperative programming languages, we will want to use sequents in
a specific way: the unbounded context will encode the semantics of
programming constructs, the bounded context will model the state and
the goal formula will represent the program producing the
computation. If we are to analyze the properties of programs using
this setup, it would be ideal if we could focus our attention on Lolli
proofs that closely follow program behavior. We show here that this is
possible. In particular, we demonstrate that, from a provability
perspective, it suffices to look at proofs that are goal-directed in
that, when looking at derivations bottom up, the first step is always
to simplify a complex goal formula.

The following definition, first introduced by Miller\emph{et al}\cite{miller91apal}, provides an encapsulation of the idea of
goal-directedness in the context of Lolli proofs. It was or
\begin{defn}[Uniform Proof]\label{def:uniformprovability} A
  uniform proof is a Lolli proof in which every sequent with a non-atomic
  goal formula on the right of $\turnstile$ is the conclusion of an
  inference rule that introduces   the top-level logical symbol of
  that formula.
\end{defn}
Towards understanding uniform provability, consider the proof shown in
Section \ref{subsec:lolli} for the sequent
\[\sequent{\emptyset}{\emptyset}{(A_1 \AMPOP A_2) \I (A_1 \OT A_2)}. \]
That proof is not a uniform proof.
In that proof, there are two $absorb$ rules and two $\AMPCON L$ rules that
have as a conclusion a sequent in which the goal formula $A_1 \OT A_2$
appears as the right of $\turnstile$. However, the same sequent does
have a uniform proof that is shown below:
\begin{customprooftree}

    \AxiomC{}
    \RightLabel{$id$}

    \UnaryInfC{$A_1 \AMPOP A_2 ; A_1 \SQ A_1 $}
    \RightLabel{$\AMPCON L$}
    \UnaryInfC{$A_1 \AMPOP A_2 ; A_1 \AMPOP A_2 \SQ A_1 $}
    \RightLabel{$absorb$}
    \UnaryInfC{$A_1 \AMPOP A_2 ; \emptyset \SQ A_1 $}

    \AxiomC{}
    \RightLabel{$id$}
    \UnaryInfC{$A_1 \AMPOP A_2 ; A_2 \SQ A_2 $}
    \RightLabel{$\AMPCON L$}
    \UnaryInfC{$A_1 \AMPOP A_2 ; A_1 \AMPOP A_2 \SQ A_2 $}
    \RightLabel{$absorb$}
    \UnaryInfC{$A_1 \AMPOP A_2 ; \emptyset \SQ A_2 $}

  \RightLabel{$\OT R$}
  \BinaryInfC{$A_1 \AMPOP A_2 ; \emptyset \SQ A_1 \OT A_2$}
  \RightLabel{$\I R$}
  \UnaryInfC{$\emptyset ; \emptyset \SQ (A_1 \AMPOP A_2) \I (A_1 \OT A_2)$}

\end{customprooftree}
In fact, every provable Lolli sequent has a uniform proof as we now show.

\begin{thm}[Lolli Admits Uniform Provability]\label{thm:uniformprovability}
The sequent $\ubsequent{G}$ has a proof in Lolli if and only if it has a uniform proof.
\end{thm}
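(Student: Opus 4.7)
The plan is to handle the two directions separately. The forward direction, that every uniform proof is a Lolli proof, is immediate from Definition~\ref{def:uniformprovability}, since a uniform proof is by construction a Lolli proof satisfying an additional restriction. The substantive content is the reverse direction: given an arbitrary Lolli derivation of $\ubsequent{G}$, we must produce a uniform derivation of the same sequent. I would proceed by induction on a measure of the given derivation, such as its height paired with the structural complexity of $G$ as a tie-breaker.

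The induction step splits on the top-level symbol of the goal $G$. When $G$ is atomic, the uniformity condition places no constraint at the root sequent, so I would simply invoke the induction hypothesis on the immediate subderivations. When $G$ is non-atomic with top-level connective $\star$, the aim is to produce a derivation whose final inference is the right-introduction rule for $\star$, with uniform subderivations above it obtained by a second appeal to the induction hypothesis. To make this step go through, I would establish a suite of permutability lemmas: roughly, a proof of a sequent with non-atomic goal $G$ that ends with a left-introduction or $absorb$ rule can always be transformed into one that ends with the right-introduction rule for the top-level symbol of $G$. For instance, to treat $G = G_1 \LI G_2$ one must show that if $\ubsequent{G_1 \LI G_2}$ is provable then so is $\usequent{\Delta, G_1}{G_2}$; analogous claims handle $\AMPCON$, $\I$, $\A$, $\E$, $\OT$, $\OP$, and $!$, and the $\top$ case is dispatched by applying $\top R$ directly.

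The hard part will be the careful bookkeeping of resources demanded by the multiplicative structure of Lolli. When permuting $\OT R$ downward past, say, a $\LI L$ step above it, the bounded context is partitioned at both inferences, and one must argue that the partitions can be refactored so that each premise of the permuted $\OT R$ still receives exactly the bounded resources its subderivation requires. The $!R$ and $\I L$ rules are similarly delicate because each constrains the bounded context of one of its premises to be empty, so any permutation involving them requires a check that no bounded formula gets stranded on the wrong side of the inference. The $absorb$ rule, by contrast, permutes easily once one observes that a copy of an unbounded formula can be produced at any point above. Once these permutation lemmas are in hand, the main induction simply combines them to push the right rule for the top-level connective of $G$ to the bottom of the derivation, yielding the required uniform proof.
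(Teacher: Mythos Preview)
Your proposal is correct and follows essentially the same permutation argument as the paper. The only organizational difference is in the induction scaffolding: the paper defines a single \emph{non-uniformity measure} (the number of inference occurrences whose conclusion has a non-atomic goal but which are not the matching right-introduction rule), locates a topmost offending left rule along a path, observes that the rule immediately above it must be a right rule, and permutes the left rule upward past it---iterating on the distance of the violation from a leaf until the violation disappears, then on the global count. You instead induct on height paired with goal complexity and phrase the same permutations as pushing the right rule for the top-level connective of $G$ downward to the root. The individual permutation cases, including the resource-splitting bookkeeping you correctly flag for $\OT R$ against $\LI L$ and the empty-bounded-context constraints around $!R$ and $\I L$, are identical in content; the paper simply treats them by the four combinations of one- versus two-premise left and right rules and declares the remaining cases similar.
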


\begin{proof} 
The ``if'' direction is obvious. For the ``only if'' direction, we
consider a proof that is not uniform and show how to transform it into
a uniform proof. We associate with a proof a non-uniformity measure
that counts the number of inference rule occurrences that do not act
on a complex goal formula that appears to the right of $\turnstile$ in
their conclusion. If this measure is non-zero, we show how to reduce
it by 1. The conclusion then follows by induction on the measure.

If a proof has a non-zero non-uniformity measure, then there must be a
path in it in which there is a first occurrence of a left rule that
has a complex goal formula to the right of $\turnstile$ in its
conclusion. We show how to reduce the height of this path by 1. By
induction on this height it follows that we can eliminate this
violation of uniformity and thereby reduce the non-uniformity measure
of the proof. Observe that since the rule in question is the first one
along the path to violate the uniformity property, it must be preceded
in the proof by a right rule. We use this fact in our argument. In
particular, we consider the possible cases for the right and left
rules and show that the left rule can be permuted above the right one,
thereby moving the violation of non-uniformity closer to a leaf.

In a detailed consideration of the cases, it is useful to categorize
rules based on the number of premises they have. Category \rom{1} will
represent rules with one premise and category \rom{2} will represent
rules with two premises.

Suppose that the case in question involves two inference rules
from category \rom{1}. An example of such a situation is the following:
\begin{center}
\begin{customprooftree}\label{prooftree:nup1}

  \AxiomC{$\Xi$}
  \alwaysNoLine
  \UnaryInfC{$\Gamma;\Delta,B_i, G_1 \SQ G_2$}
  \alwaysSingleLine
  \RightLabel{$\LI R$}
  \UnaryInfC{$\Gamma;\Delta,B_i\SQ G_1\LI G_2$}
  \RightLabel{$\AMPCON L(i\in\{1,2\})$}
  \UnaryInfC{$\Gamma;\Delta,B_1\AMPOP B_2 \SQ G_1\LI G_2$}

\end{customprooftree}
\end{center}
This proof can be rearranged as follows:
\begin{center}
\begin{customprooftree}

  \AxiomC{$\Xi$}
  \alwaysNoLine
  \UnaryInfC{$\Gamma;\Delta,B_i, G_1 \SQ G_2$}
  \alwaysSingleLine
  \RightLabel{$\AMPCON L(i\in\{1,2\})$}
  \UnaryInfC{$\Gamma;\Delta,B_1 \AMPOP B_2, G_1 \SQ G_2$}
  \RightLabel{$\LI R$}
  \UnaryInfC{$\Gamma;\Delta,B_1 \AMPOP B_2 \SQ G_1 \LI G_2$}

\end{customprooftree}
\end{center}
By permuting the left rule above the right one, we have reduced the
length of the path by 1 as required.
A similar argument applies to all
the other cases of rules in these two respective categories.

Suppose that the case in question involves a right inference rule from
category \rom{1} and a left inference rule from category \rom{2}. An
example of this kind is presented by the following derivation:
\begin{center}
\begin{customprooftree}

    \AxiomC{$\Psi$}
    \alwaysNoLine
    \UnaryInfC{$\Gamma;\Delta_1 \SQ B_1$}
    \alwaysSingleLine

    \AxiomC{$\Xi$}
    \alwaysNoLine
    \UnaryInfC{$\Gamma;\Delta_2,B_2\SQ G_i$}
    \alwaysSingleLine
    \RightLabel{$\OP R(i\in\{1,2\})$}
    \UnaryInfC{$\Gamma;\Delta_2,B_2\SQ G_1\OP G_2$}

  \RightLabel{$\LI L$}
  \BinaryInfC{$\Gamma;\Delta_1,\Delta_2,B_1\LI B_2\SQ G_1 \OP G_2$}

\end{customprooftree}
\end{center}

In this case the derivation can be rearranged as follows to once again
reduce the length of the path by 1:
\begin{center}
\begin{customprooftree}

    \AxiomC{$\Psi$}
    \alwaysNoLine
    \UnaryInfC{$\Gamma;\Delta_1 \SQ B_1$}
    \alwaysSingleLine

    \AxiomC{$\Xi$}
    \alwaysNoLine
    \UnaryInfC{$\Gamma;\Delta_2,B_2 \SQ G_i$}
    \alwaysSingleLine

  \RightLabel{$\LI L$}
  \BinaryInfC{$\Gamma;\Delta_1,\Delta_2,B_1\LI B_2,\SQ G_i$}

  \RightLabel{$\OP R(i\in\{1,2\})$}
  \UnaryInfC{$\Gamma;\Delta_1,\Delta_2,B_1\LI B_2\SQ G_1 \OP G_2$}

\end{customprooftree}
\end{center}
The other cases for rules from the categories under consideration are similar.

Suppose the case in question involves a right inference rule from category
\rom{2} and a left inference rule instance from category \rom{1}. An
example of this kind is the following:
\begin{center}
\begin{customprooftree}\label{prooftree:nup3}

    \AxiomC{$\Psi$}
    \alwaysNoLine
    \UnaryInfC{$\Gamma;\Delta_1,B_i \SQ G_1$}
    \alwaysSingleLine

    \AxiomC{$\Xi$}
    \alwaysNoLine
    \UnaryInfC{$\Gamma;\Delta_2 \SQ G_2$}
    \alwaysSingleLine

  \RightLabel{$\OT R$}
  \BinaryInfC{$\Gamma;\Delta_1,\Delta_2,B_i\SQ G_1 \OT G_2$}

  \RightLabel{$\AMPCON L (i\in\{1,2\})$}
  \UnaryInfC{$\Gamma;\Delta_1,\Delta_2,B_1\AMPOP B_2\SQ G_1\OT G_2$}

\end{customprooftree}
\end{center}
Here again, we can permute the left inference rule above the right one
as follows:
\begin{center}
\begin{customprooftree}

    \AxiomC{$\Psi$}
    \alwaysNoLine
    \UnaryInfC{$\Gamma;\Delta_1,B_i \SQ G_1$}
    \alwaysSingleLine
    \RightLabel{$\AMPCON L$}
    \UnaryInfC{$\Gamma;\Delta_1,\Delta_2,B_1\AMPOP B_2\SQ G_1\OT G_2$}

    \AxiomC{$\Xi$}
    \alwaysNoLine
    \UnaryInfC{$\Gamma;\Delta_2 \SQ G_2$}
    \alwaysSingleLine

  \RightLabel{$\OT R$}
  \BinaryInfC{$\Gamma;\Delta_1,\Delta_2,B_1\AMPOP B_2\SQ G_1 \OT G_2$}

\end{customprooftree}
\end{center}
The other cases under this combination are treated similarly.

Finally, suppose that the situation under consideration involves a
right and left inference rule both from category \rom{2}. An example
of this kind is the following:
\begin{center}
\begin{customprooftree}

   \AxiomC{$\Psi$}
   \alwaysNoLine
   \UnaryInfC{$\Gamma ; \Delta_1 \SQ B_1$}

    \AxiomC{$\Xi$}
    \alwaysNoLine
    \UnaryInfC{$\Gamma ;\Delta_2, B_2 \SQ G_1$}
    \alwaysSingleLine

    \AxiomC{$\Theta$}
    \alwaysNoLine
    \UnaryInfC{$\Gamma ;\Delta_3 \SQ G_2$}
    \alwaysSingleLine

   \RightLabel{$\OT R$}
   \BinaryInfC{$\Gamma ;\Delta_2, \Delta_3, B_2 \SQ G_1 \OT G_2$}

  \RightLabel{$\LI L$}
  \BinaryInfC{$\Gamma ; \Delta_1,\Delta_2,\Delta_3, B_1 \LI B_2 \SQ G_1 \OT G_2$}

\end{customprooftree}
\end{center}
Here we rearrange the derivation as follows, again obviously reducing
the length of the path to the errant left rule by one.
\begin{center}
\begin{customprooftree}

    \AxiomC{$\Psi$}
    \alwaysNoLine
    \UnaryInfC{$\Gamma ; \Delta_1 \SQ B_1$}

    \AxiomC{$\Xi$}
    \alwaysNoLine
    \UnaryInfC{$\Gamma ;\Delta_2, B_2 \SQ G_1$}
    \alwaysSingleLine

   \RightLabel{$\LI L$}
   \BinaryInfC{$\Gamma ;\Delta_1, \Delta_2, B_1 \LI B_2 \SQ G_1$}

   \AxiomC{$\Theta$}
   \alwaysNoLine
   \UnaryInfC{$\Gamma ;\Delta_3 \SQ G_2$}
   \alwaysSingleLine

  \RightLabel{$\OT L$}
  \BinaryInfC{$\Gamma ; \Delta_1,\Delta_2,\Delta_3, B_1 \LI B_2 \SQ G_1 \OT G_2$}

\end{customprooftree}
\end{center}
The other cases for the rules in the category under consideration are
treated similarly.
\end{proof}

We are thinking of modeling natural semantics style inference rules
using the $\LI$ connective: modeled natural semantics rule conclusion
relations will occur to the right, also known as the {\em head}, of a
$\LI$ and modeled premise relations will occur to left, also known as
the {\em body}, of a $\LI$. When modeled this way, $\LI L$ application
on formulas with heads matching atomic goals mimics natural semantics
style derivation construction. A backward chaining proof search
strategy is one where this process is repeated for proof construction.

The following definition captures the structure of proofs built using
a backward chaining proof search strategy.
\begin{defn}[Simple Proof]\label{def:simpleproofs}
A uniform proof is simple if every left introduction inference rule
instance acts on a marked formula. A unique formula in the bounded
context is marked if it is the principal formula of an $id$ instance or if:
\begin{itemize}
\item $P_1$ or $P_2$ are marked in the premises sequent of a $\AMPCON L$
instance then the formula $P_1\AMPOP P_2$ is marked in the conclusion
sequent.
\item $P[t/x]$ is marked in the premise sequent of a $\A L$ instance
then the formula $\A x.P$ is marked in the conclusion sequent.
\item $P$ is marked in the right-hand premise sequent of a $\LI L$
instance then formula $G \LI P$ is marked in the conclusion sequent.
\item $P$ is marked in the right-hand premise sequent of a $\I L$
instance then formula $G \I P$ is marked in the conclusion sequent.
\end{itemize}
\end{defn}

The second proof from Section \ref{subsec:lolli} is an example of a
proof that is not simple. This can be illustrated by attempting to
mark the proof according Definition \ref{def:simpleproofs}. In the
following proof, the dots indicate formulas which can be marked.
\begin{customprooftree}

  \AxiomC{}
  \RightLabel{$id$}
  \UnaryInfC{$A_1;\dot{A_1}\SQ A_1$}
  \RightLabel{$absorb$}
  \UnaryInfC{$A_1;\emptyset \SQ A_1$}

   \AxiomC{}
   \RightLabel{$id$}
   \UnaryInfC{$A_1;\dot{A_2}\SQ A_2$}

   \AxiomC{}
   \RightLabel{$id$}
   \UnaryInfC{$A_1;\dot{A_3}\SQ A_3$}

  \RightLabel{$\LI L$}
  \BinaryInfC{$A_1;A_2,\dot{A_2\LI A_3}\SQ A_3$}

  \RightLabel{$\LI L$}
  \BinaryInfC{$A_1; A_1\LI A_2,A_2\LI A_3\SQ A_3$}

\end{customprooftree}

Consider the bottom most $\LI L$ instance principal formula $A_1\LI
A_2$, call this instance one.  According to the marking strategy, for
this formula to be marked $A_2$ must be marked at the root of the
right-hand premise sub-proof of instance one.  Consequently, instance
one acts on a unmarked formula.

The following proof is a simple proof for the same sequent. Observe
that every principal formula of a left introduction rule is marked.
\begin{customprooftree}

    \AxiomC{}
    \RightLabel{$id$}
    \UnaryInfC{$A_1;\dot{A_1}\SQ A_1$}
    \RightLabel{$Absorb$}
    \UnaryInfC{$A_1;\emptyset \SQ A_1$}

    \AxiomC{}
    \RightLabel{$id$}
    \UnaryInfC{$A_1;\dot{A_2}\SQ A_2$}

   \RightLabel{$\LI L$}
   \BinaryInfC{$A_1;\dot{A_1\LI A_2}\SQ A_2$}

   \AxiomC{}
   \RightLabel{$id$}
   \UnaryInfC{$A_1;\dot{A_3}\SQ A_3$}

  \RightLabel{$\LI L$}
  \BinaryInfC{$A_1; A_1\LI A_2, \dot{A_2\LI A_3}\SQ A_3$}
\end{customprooftree}

We now show that every provable sequent in Lolli has a simple proof.
\begin{thm}[The Original Specification Logic Admits
Simple Provability]\label{thm:simpleprovability} 
The sequent $\ubsequent{G}$ has a uniform proof in Lolli if and only
if it has a simple proof in Lolli.
\end{thm}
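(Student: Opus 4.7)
The ``if'' direction is immediate since every simple proof is, by definition, also a uniform proof. For the ``only if'' direction, my plan is to mimic the permutation-based strategy of Theorem~\ref{thm:uniformprovability}, but with a new measure tailored to simplicity: I will define the \emph{non-simplicity measure} of a uniform proof to be the number of left introduction rule instances whose principal formula is not the marked formula in their conclusion, and then show that any proof with positive non-simplicity measure can be rearranged into a uniform proof with strictly smaller measure. The theorem then follows by induction on this measure.

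The crucial observation I will exploit is that a left introduction rule appearing directly above an $id$ instance along the rightmost chain of a left-rule segment (the chain obtained by always following the right premise of $\LI L$ and $\I L$) must be marked: $id$ forces its conclusion's bounded context to be the singleton goal atom, so the rule just below must act on a formula that immediately decomposes to that atom, making its principal formula marked. The strategy is then to pick an unmarked left introduction rule $R$ for which the number of intervening rules between it and the closest $id$ above in its chain is minimal; by minimality and the observation, every left introduction rule strictly between $R$ and that $id$ is already marked. The engine is a permutation lemma showing that $R$ and the rule $R'$ immediately above it can always be swapped. Because the marked formula at each sequent is unique and $R$ acts on an unmarked formula while $R'$ either is $absorb$ or acts on a marked one, the two rules act on distinct bounded formulas, so their decompositions commute locally; the markings above are preserved because the upper portion of the chain is unchanged, $R'$ remains marked at its new lower position, and $R$ remains unmarked at its new higher position. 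Iterating the swap drives $R$ to the position directly below $id$, at which point the crucial observation forces $R$ to become marked, lowering the measure by one.

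The main obstacle I anticipate is the case analysis for the permutation lemma, especially when $R'$ is one of the binary left rules $\LI L$ or $\I L$. These rules split the bounded context, and after the swap one must re-choose the partition so as to route $R$'s principal formula into the right-branch component of $R'$'s new instance. The key point that makes this work is that $R$'s principal formula must have lain in the right-branch component of $R'$'s original split, since its decomposition appears in $R'$'s right premise; consequently the body sub-proof attached to $R'$'s left premise can be reused unchanged. Commuting across an $absorb$ rule is easier and rests on the fact that $absorb$ never introduces a bounded formula that $R$ could act on, so $absorb$ and $R$ trivially commute.
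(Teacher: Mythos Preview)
Your overall plan---define a non-simplicity measure and reduce it by permuting an unmarked left rule $R$ past the rule $R'$ immediately above it---is exactly the paper's strategy. The gap is in your permutation lemma for the case where $R'$ is $\LI L$ or $\I L$. You assert that $R$'s decomposed formula must lie in $R'$'s right premise, so that the left-premise subproof can be reused unchanged and $R$ can stay on the rightmost chain toward $id$. This assertion is false. Take the uniform proof of $\emptyset;\, a \AMPOP b,\, a \LI c \SQ c$: the $\AMPCON L$ instance selects $a$, and the $\LI L$ instance above it sends that $a$ into its \emph{left} premise (to discharge the body $a$). Here $\AMPCON L$ is unmarked, $\LI L$ above it is marked, yet the decomposed occurrence of $a$ sits in the left branch of the split. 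Nothing in the minimality hypothesis or in the markedness of $R'$ forces the decomposition into the right branch. In such a configuration your proposed swap cannot place $R$ in $R'$'s right premise while reusing the old left-branch subproof, because that subproof genuinely consumes $R$'s decomposition.

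The paper's proof handles precisely this situation by permuting $R$ into the \emph{left} premise of $R'$ instead---that is, it follows the decomposed formula rather than the marked chain. Your distance-to-$id$ bookkeeping along the rightmost chain does not survive this case: once $R$ is pushed into the left premise it leaves that chain entirely (and the goal there need not even be atomic), so the iterated-swap termination argument as you stated it breaks down and must be reworked.
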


\begin{proof}
This proof is similar to the proof given in Theorem
\ref{thm:uniformprovability}. The ``if'' direction is obvious and in
the ``only if'' direction, we associate with a proof a non-simple
measure that counts the number of unmarked principal formulas
occurring to the left of a $\turnstile$. If this measure is non-zero,
we show how to reduce it by 1. The conclusion then follows by
induction on the measure.

Observe that if a non-simple instance occurs below an $absorb$
instance a permutation is immediate. Therefore, we restrict analysis
to non-simple instances below instances that are not $absorb$.

Suppose the non-simple instance is a $\AMPCON$ or $\A$ left introduction
instance. Observe, that the rule above this non-simple instance must
be a left introduction instance; $A$ is atomic so no right introduction rules
apply. If the rule is below a $\A$ or $\AMPCON$ left introduction instance
permutation of these instances is immediate. If the rule above is a
left introduction instance of $\LI$ a straightforward permutation is
possible. We consider one such case in detail where $\Psi$ and $\Xi$
are simple proofs.
\begin{center}
\begin{customprooftree}

   \AxiomC{$\Psi$}
   \alwaysNoLine
   \UnaryInfC{$\Gamma ; \Delta_1,P_i \SQ B_1$}
   \alwaysSingleLine

   \AxiomC{$\Xi$}
   \alwaysNoLine
   \UnaryInfC{$\Gamma ; \Delta_2, P_2 \SQ A$}
   \alwaysSingleLine

  \RightLabel{$\LI L$}
  \BinaryInfC{$\Gamma ;\Delta_1,\Delta_2,P_1\LI P_2, P_i\SQ A$}
  \RightLabel{$\AMPCON L(i\in\{3,4\})$}
  \UnaryInfC{$\Gamma ; \Delta_1,\Delta_2, P_1 \LI P_2, P_3 \AMPOP P_4 \SQ A$}

\end{customprooftree}
\end{center}
This non-simple uniform proof may be permuted to one with the following
form:
\begin{center}
\begin{customprooftree}

   \AxiomC{$\Psi$}
   \alwaysNoLine
   \UnaryInfC{$\Gamma ; \Delta_1, P_i \SQ P_1$}
   \alwaysSingleLine
   \RightLabel{$\AMPCON L$}
   \UnaryInfC{$\Gamma; \Delta_1, P_3\AMPOP P_4 \SQ P_1$}

   \AxiomC{$\Xi$}
   \alwaysNoLine
   \UnaryInfC{$\Gamma ; \Delta_2, P_2 \SQ A$}
   \alwaysSingleLine

  \RightLabel{$\LI L$}
  \BinaryInfC{$\Gamma ; \Delta_1, \Delta_2, P_1 \LI P_2, P_3 \AMPOP P_4 \SQ A$}

\end{customprooftree}
\end{center}
When the non-simple instance is a $\A$ left introduction instance or
the instance above is a $\I$ the permutations differ only slightly.

Suppose the non-simple instance is a $\LI$ or $\I$ left introduction
instance. Observe, that the right premise must begin with a left
introduction instance; $A$ is atomic so no right introduction rules
apply. Furthermore, observe that the left premise is irrelevant with
respect to marking. We consider one case in detail where $\Psi$
and $\Xi$ are simple proofs.
\begin{center}
\begin{customprooftree}

    \AxiomC{$\Chi$}
    \noLine{}
    \UnaryInfC{$\Gamma ; \Delta_3 \SQ P_3$}
    
     \AxiomC{$\Psi$}
     \noLine
     \UnaryInfC{$\Gamma ; \Delta_1,P_4 \SQ P_1$}

     \AxiomC{$\Xi$}
     \noLine
     \UnaryInfC{$\Gamma ; \Delta_2, P_2 \SQ A$}

    \RightLabel{$\LI L$}
    \BinaryInfC{$\Gamma ;\Delta_1,\Delta_2,P_1\LI P_2, P_4\SQ A$}

  \RightLabel{$\LI L$}
  \BinaryInfC{$\Gamma ; \Delta_1,\Delta_2, \Delta_3, P_1 \LI P_2, P_3 \LI P_4 \SQ A$}

\end{customprooftree}
\end{center}
This non-simple uniform proof may be permuted to one with the
following form:
\begin{center}
\begin{customprooftree}

     \AxiomC{$\Chi$}
     \noLine{}
     \UnaryInfC{$\Gamma ; \Delta_3 \SQ P_3$}
    
     \AxiomC{$\Psi$}
     \noLine
     \UnaryInfC{$\Gamma ; \Delta_1,P_4 \SQ P_1$}

    \RightLabel{$\LI L$}
    \BinaryInfC{$\Gamma ;\Delta_1,\Delta_3,P_3\LI P_4\SQ P_1$}

    \AxiomC{$\Xi$}
    \noLine
    \UnaryInfC{$\Gamma ; \Delta_2, P_2 \SQ A$}

  \RightLabel{$\LI L$}
  \BinaryInfC{$\Gamma ; \Delta_1,\Delta_2, \Delta_3, P_1 \LI P_2, P_3 \LI P_4 \SQ A$}
\end{customprooftree}
\end{center}
The remaining permutations involving non-simple $\LI$ and $\I$ left
introduction instances and follow this permutation closely.
\end{proof}

Instances of $absorb$ may appear anywhere prior to the use of its
principal formula. Without further meta-theoretical results, natural semantics style
derivation mimicry in Lolli will be modulo $absorb$ instance
placement. The following definition prescribes an exact placement for
all $absorb$ instance.

\begin{defn}[Coincided Proof]\label{def:coincidedproofs}
A coincided proof is a simple proof where every $absorb$ rule
instance unbounded premise formula corresponding to the principal
formula is the principal formula of a left introduction or identity
rule instance directly above it.
\end{defn}

The first proof in this section is not a coincided one because an
$absorb$ instance is detached where the proof in Subsection
\ref{subsec:speclogic} is a coincided one because all $absorb$ instances
satisfy Definition \ref{def:coincidedproofs}.

\begin{thm}[Lolli Admits Conincided Provability]\label{thm:coincidedprovability}
The sequent $\ubsequent{G}$ has a simple proof in Lolli if and only if
it has a conincided proof in Lolli.
\end{thm}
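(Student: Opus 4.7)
The ``if'' direction is immediate, since a coincided proof is by definition simple. For the ``only if'' direction I would follow the template of Theorems \ref{thm:uniformprovability} and \ref{thm:simpleprovability}: define a non-coincidence measure on simple proofs, show that any positive value can be reduced by a local permutation that preserves simplicity, and conclude by induction. A natural first attempt is to sum, over all $absorb$ instances in the proof, the number of inference rule instances between the $absorb$ and the eventual use of its principal formula as the principal formula of some left-introduction or $id$ instance.

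Fix a non-coincided $absorb$ on $B$ whose premise is derived by some rule $r$. Since $absorb$ leaves the goal formula unchanged and a uniform proof forces the goal to be atomic wherever $absorb$ is applied, $r$ cannot be a right-introduction rule. Hence $r$ is either another $absorb$, an $id$ instance, or a left-introduction rule acting on a marked formula $C$ (if $C = B$ the $absorb$ would already be coincided; and in the $id$ case the singleton bounded context forces $C = B$, so this case cannot occur). For the remaining cases I would permute the $absorb$ upward past $r$: for single-premise left-introductions ($\AMPCON L$, $\A L$), reapply $r$ below with the $absorb$ now occupying its unique premise; for two-premise left-introductions ($\LI L$, $\I L$), push the $absorb$ into whichever premise inherits $B$ in the bounded-context split (for $\I L$ this is forced to be the right premise, since the left premise has empty bounded context); and for another $absorb$ instance on $B' \neq B$, simply interchange the two instances. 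In each case marking is preserved because the set of left-introduction principal formulas and their markings is unchanged, so the resulting proof is still simple.

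The main obstacle I anticipate is the degenerate case where $r$ is another $absorb$ instance on the very same formula $B$. Syntactically, interchanging two $absorb$s on $B$ produces an identical proof, so the naive measure above does not strictly decrease. I would address this by refining the measure: tag each copy of $B$ in the bounded context with the specific $absorb$ that introduced it, and define the distance of an $absorb$ to be the number of rules along the path leading to the consumption of its tagged copy as a principal formula of an $id$ or left-introduction instance. A simple proof guarantees such a consumer exists, since at an $id$ the bounded context is a singleton and $\top R$ cannot appear above an $absorb$ whose conclusion has an atomic goal. With this refinement, interchanging two $absorb$s on $B$ relabels the tags so that the targeted $absorb$ is strictly closer to its assigned consumer, and the induction goes through.
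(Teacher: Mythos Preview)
Your approach is the paper's: permute $absorb$ upward and induct on a non-coincidence count. The paper's own proof is two sentences---it simply asserts that ``absorb instances may be permuted up until they coincide'' and inducts on the number of non-coincided instances---so your write-up is already more careful, and your observation about two stacked $absorb$s on the same $B$ is a genuine subtlety the paper does not mention.

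There is, however, a gap in your case analysis. You argue that the rule $r$ directly above a non-coincided $absorb$ cannot be a right-introduction because the goal there is atomic. That holds for the first step. But when $r$ is $\LI L$ and the absorbed copy of $B$ lands in the \emph{left} premise, the goal of that premise is an arbitrary $G$-formula, possibly non-atomic. Pushing the $absorb$ into that premise produces a proof that is no longer uniform, so on the next iteration your ``$r$ is not a right rule'' reasoning no longer applies. Relatedly, inside that left-premise subproof the copy of $B$ may be discarded by $\top R$ rather than reaching an $id$ or a left-introduction, so your distance-to-consumer measure can be undefined; the sentence ``$\top R$ cannot appear above an $absorb$ whose conclusion has an atomic goal'' only excludes $\top R$ \emph{directly} above. (A further wrinkle in that region: permuting $absorb$ past $\AMPCON R$ would duplicate it into both premises.)

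The repair is routine. Either extend the single-step permutation to cover right-introduction rules as well---$absorb$ commutes past all of them---and add clauses that delete an $absorb$ whose tagged copy reaches a $\top R$ and duplicate it across $\AMPCON R$; or, closer to the paper's phrasing, push one non-coincided $absorb$ all the way to its consumer (or to a $\top R$, where it is dropped) in a single compound move, note that any $id$/left-rule consumer sits at an atomic-goal sequent so the resulting proof is again simple, and decrement the count of non-coincided $absorb$s by one.
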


\begin{proof}
This proof is similar to the previous ones.  Observe that all
coincided proofs are simple ones, this satisfies the ``if''
direction. Now consider the ``only if'' direction. It is easy to see
that $absorb$ instances may be permuted up until they coincide with a
left introduction or identity rule instance. From this, we may
conclude the argument by induction on the measure of non-coincided
$absorb$ rule instances.
\end{proof}

Theorems \ref{thm:uniformprovability}, \ref{thm:simpleprovability},
and \ref{thm:coincidedprovability} can be used to yield a reduced
proof system that admits only coincided proofs. To do so we first
inductively define a unary predicate $||P||$ where $P$ is a program
clause formula that captures a backward chaining proof search
strategy. The predicate takes a program clause formula as an argument
and returns a set of triples where the first, second, and third
projection is a set of goal formulas, a multiset of goal formulas, and
a program clause formula, respectively. Each triple represents
unbounded(the first projection) and bounded(the second projection)
proof obligations for some program clause formula. An unbounded proof
obligation is one that must be provided strictly from the unbounded
context and a bounded proof obligation is one that must be proved from
some portion of the bounded context. Let $||P||$ be the smallest set
such that:
\begin{enumerate}
 \item $\langle \emptyset , \emptyset , A \rangle \in ||A||$
 \item if $\langle \Gamma , \Delta , P_1\AMPOP P_2 \rangle \in ||P|| $
then both $\langle \Gamma, \Delta , P_1 \rangle \in ||P|| $ and
$\langle \Gamma, \Delta , P_2 \rangle \in ||P||$
 \item if $\langle \Gamma, \Delta , \A x. P \rangle \in ||P|| $ then,
for all closed terms $t$, $\langle \Gamma, \Delta , P[t/x] \rangle \in
||P||$
 \item if $\langle \Gamma , \Delta, P_1 \I P_2 \rangle \in ||P||$ then
$\langle \Gamma\cup P_1, \Delta , P_2 \rangle \in ||P||$
 \item if $\langle \Gamma , \Delta, P_1 \LI P_2 \rangle \in ||P||$
then $\langle \Gamma, \Delta\uplus P_1 , P_2 \rangle \in ||P||$
\end{enumerate}

Let our specification logic have all right introduction rules from
Figure \ref{fig:lollirules} and the back chaining rules given in
Figure \ref{fig:backchaining}.

\begin{figure*}
\begin{framed}

\begin{customprooftree}

  \AxiomC{$\usequent{\emptyset}{B_1}$}
  \AxiomC{$\hdots$\ \ $\usequent{\emptyset}{B_n}$ $\usequent{\Delta_1}{C_1}$ \ \ $\hdots$}
  \AxiomC{$\usequent{\Delta_m}{C_m}$}
  \RightLabel{$BC_u$}
  \TrinaryInfC{$\sequent{\Gamma,B}{\Delta_1,\hdots,\Delta_m}{A}$}

  \AxiomC{$\usequent{\emptyset}{B_1}$}
  \AxiomC{$\hdots$\ \ $\usequent{\emptyset}{B_n}$ $\usequent{\Delta_1}{C_1}$ \ \ $\hdots$}
  \AxiomC{$\usequent{\Delta_m}{C_m}$}
  \RightLabel{$BC_b$}
  \TrinaryInfC{$\usequent{\Delta_1,\hdots,\Delta_m,B}{A}$}

\alwaysNoLine
\BinaryInfC{}
\end{customprooftree}

\caption{
In the specification logic, these back chaining rules will replaces all left-introduction rules from Figure \ref{fig:lollirules}.
Both have the proviso that $n,m\geq 0$ and $\langle \{ B_1,\hdots,B_n\},\{C_1,\hdots,C_m\},A\rangle \in ||B||$
}
\label{fig:backchaining}
\end{framed}
\end{figure*}

There are two forms of backward chaining in this figure both having as
their principal formula $B$. Intuitively, an instance of both could
replace a series of left introduction instances in a coincided
proof. Using the former requires that the left introduction series
begin (in a bottom-up reading) with an $absorb$ instance.

\begin{thm}[The Specification Logic and Lolli
Equivalence]\label{thm:logicsystemequivalence} The sequent
$\ubsequent{G}$ has a proof in Lolli if and only if it has a proof in
the specification logic.
\end{thm}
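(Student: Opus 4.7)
The plan is to combine Theorems~\ref{thm:uniformprovability}, \ref{thm:simpleprovability}, and \ref{thm:coincidedprovability} to reduce Lolli provability to coincided provability, and then establish a direct structural correspondence between coincided proofs in Lolli and proofs in the specification logic.

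For the ``if'' direction (specification logic proof implies Lolli proof), I would induct on the structure of the specification logic proof. Right-introduction rules are common to both systems and present no difficulty. For a $BC_u$ instance with principal formula $B$, the side condition $\langle \{B_1,\ldots,B_n\},\{C_1,\ldots,C_m\},A\rangle \in ||B||$ is itself inductively defined, so I would unwind that derivation: rule~(2) of the definition yields a $\AMPCON L$ step, rule~(3) yields a $\A L$ step using the witnessing term $t$, rule~(4) yields a $\I L$ step (whose left premise $\usequent{\emptyset}{B_i}$ discharges one of the unbounded obligations), rule~(5) yields a $\LI L$ step (whose left premise $\usequent{\Delta_j}{C_j}$ discharges a bounded obligation), and rule~(1) supplies the terminal $id$ on $A$. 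Composing these steps produces a Lolli derivation of $\sequent{\Gamma,B}{B,\Delta_1,\ldots,\Delta_m}{A}$; a final $absorb$ step converts this into the conclusion of the $BC_u$ rule. The $BC_b$ case is identical minus the $absorb$.

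For the ``only if'' direction, I would first invoke Theorems~\ref{thm:uniformprovability}, \ref{thm:simpleprovability}, and~\ref{thm:coincidedprovability} to obtain a coincided proof of $\ubsequent{G}$. I would then induct on the height of that coincided proof. Right-introduction rules translate into themselves. For left-introduction rules, the coincided and simple properties together ensure that every such rule belongs to a maximal chain whose principal formula is a single marked formula, starting (reading bottom-up) either from an $absorb$ drawing a copy of some $B \in \Gamma$ into the bounded context, or from a formula $B$ already in the bounded context, and terminating at an $id$ on some atom $A$. Reading this chain top-down, each $\LI L$ contributes a bounded obligation $\usequent{\Delta_j}{C_j}$, each $\I L$ contributes an unbounded obligation $\usequent{\emptyset}{B_i}$, each $\AMPCON L$ records a disjunct choice, and each $\A L$ records an instantiation; the resulting triple $\langle\{B_1,\ldots,B_n\},\{C_1,\ldots,C_m\},A\rangle$ is manifestly in $||B||$ by the clauses of its definition, so the whole chain collapses into a single $BC_u$ (if an $absorb$ was its root) or $BC_b$ (otherwise) application in the specification logic.

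The main obstacle will be making the chain-to-triple bookkeeping fully precise in the presence of the ``structural'' left rules $\AMPCON L$ and $\A L$, which reshape the principal formula without generating new premises of the backward chaining rule. I would need to argue that these steps may be freely interleaved with premise-producing $\LI L$ and $\I L$ steps along a chain, and that the triple extracted at the end is invariant under such interleavings thanks to the inductive closure shape of $||B||$. A secondary subtlety is ensuring that the $\A L$ instantiating terms used in the Lolli chain are the very same closed terms permitted by clause~(3) of $||B||$, which is handled by carrying the substitution through the induction. Once this is settled, the two inductions close and yield the claimed equivalence.
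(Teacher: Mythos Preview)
Your proposal is correct and follows essentially the same approach as the paper: both directions hinge on the correspondence between a single $BC$ instance and a run of left-introduction (and possibly $absorb$) rules, with the ``only if'' direction first passing through Theorems~\ref{thm:uniformprovability}, \ref{thm:simpleprovability}, and~\ref{thm:coincidedprovability} to obtain a coincided proof. The paper's own proof is considerably terser than yours---it simply asserts that back chaining instances can be replaced by sequences of left rules and vice versa---so your more explicit bookkeeping (the induction on $||B||$ in the ``if'' direction and the chain-collapsing induction in the ``only if'' direction) fills in detail the paper leaves to the reader.
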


\begin{proof}
In the ``if'' direction, due to the definition of the backward
chaining rules, any back chaining instance in the specification logic
proof can be replace by some sequence of left introduction and
$absorb$ instances from Lolli.

Now, consider the ``only if'' direction.  Application of Theorems
\ref{thm:uniformprovability} followed by \ref{thm:simpleprovability}
and finally \ref{thm:coincidedprovability} allows us to convert a
Lolli proof to a coincided proof.  Finally, by Definition
\ref{def:simpleproofs} and the definition of $||P||$, we may replace
runs of left-introduction and $absorb$ instances by one of the two
instances of backward chaining.
\end{proof}

\section{Modeling Imperative Programming Languages}
\label{sec:formalization}

In this section, the imperative programming language defined in
Section \ref{sec:imp} is modeled using the specification logic
presented in Section \ref{sec:speclogic}. Additionally, proof mimicry
of derivations is demonstrated by considering the proof of a modeled
evaluation relation and that evaluation relations derivation.
Throughout this section and the rest of this paper, we refer to the
imperative programming language and its evaluation semantics as the
``object system''.

\subsection{The Model}
\label{subsec:formalization}

Our model extends the kinds of types we may have. Types in our model
will now include a type for programs in the model, $\programs$ and for
syntax representing natural numbers, $\naturals$ (again, overloaded
for use in the specification logic).

Let $\termmappingname$ be a function that translates programs from
${\cal L}$ given in Section \ref{sec:imp} into terms of type $\programs$ in the
specification logic.
\begin{equation}\label{eqn:termmapping}
\termmapping{i} =
\left\{
\begin{array}{ll}
  \vt{\NT{i}}  & \mbox{if } i \in \naturals \\
  \addt{\termmapping{j}}{\termmapping{k}}  & \mbox{if } i = j \hat{+} k \\
  \subt{\termmapping{j}}{\termmapping{k}}  & \mbox{if } i = j \hat{-} k \\
  \gtt{\termmapping{j}}{\termmapping{k}} & \mbox{if } i = j\hat{>}k \\
  \gett{\termmapping{j}} & \mbox{if } i = \hat{*}j \\
  \sett{\termmapping{j}}{\termmapping{k}} & \mbox{if } i = j\assign k \\  
  \seqt{\termmapping{j}}{\termmapping{k}} & \mbox{if } i = j;k \\
  \wht{\termmapping{j}}{\termmapping{k}} & \mbox{if } i = \mbox{while } j \mbox{ do } k
\end{array}
\right.
\end{equation}

Observe that any function can be represented as set of tuples relating
inputs to outputs. Such representations are often referred to as
function graphs. Let $\memmappingname$ be a recursive function that
translates memory function graphs to multisets composed exclusively of
occurrences of the binary predicate $\memformname$ with the type
$\naturals\rightarrow\naturals\rightarrow\omicron$. The first argument
to $\memformname$ represents a memory location and the second argument
represents the value stored at that location.
\begin{equation}\label{eqn:memmapping}
\memmapping{M} =
\left\{
\begin{array}{ll}
  \emptyset & \mbox{if } M = \emptyset\\
  \memform{l}{v} \uplus \memmapping{M'} & \mbox{if } M = {(l,v)} \cup M'
\end{array}
\right.
\end{equation}

The ternary evaluation predicate $\evaltname$ is defined in Figure
\ref{fig:evalt} and has the type
$\programs\rightarrow\naturals\rightarrow\omicron\rightarrow\omicron$.
This definition models the natural semantics rules given in Figure
\ref{fig:sosimp}.  Its first argument is the program expression to be
evaluated, its second argument is an element from $\naturals$
representing the return value of the input program, and its third
argument is a formula that must be proved in the memory left behind
after the program expression has been evaluated.  In this definition,
explicit quantification has been removed for clarity.  All capitalized
terms occurring in the head of a program clause formula are
universally quantified variables.  All capitalized terms occurring
exclusively in the body of a program clause formula are existentially
quantified variables.

The $\evaltname$ predicate relies heavily on a continuation-passing
style\cite{sussman75unp} where the universally quantified variable $C$
with type $\omicron$ is a continuation. The use of continuations
allows a natural way to express the subsequent evaluation of program
expression in potentially modified memory. For example, consider the
program clause formula in Figure \ref{fig:evalt} modeling sequencing
in the object system. As noted at the end of Subsection
\ref{subsec:evalsems}, one method for building a derivation would be
by building derivations for the premises in a left-to-right order. We
capture this method in this formula: the first program expression
should be evaluated and this may result in modified memory, the second
program expression should be modeled for evaluation in this modified
memory. Therefore, we extend the continuation with an evaluation
predicate for the second program expression.

For each natural semantics style rule given in section \ref{subsec:evalsems} there
is a corresponding formula in Figure \ref{fig:evalt}.  A simple
heuristic was followed to model each rule: modeled premises of a rule
``extend'' the continuation, becoming the body of a program clause
formula. Continuation extension is done in left-to-right premise
order. Finally, the conclusion of the rule will become the head of a
program clause formula.

As we did in Section \ref{sec:imp}, we will overloaded the operators
$+$, $-$, $>$, and $\leq$. Again, we associate the usual subtraction
operation on natural numbers: $N_1 - N_2$ is $0$ if $N_1$ is less than
$N_2$.
\begin{figure*}
\begin{framed}
\begin{center}

\begin{equation*}
\begin{array}{lll}
 C & \LI & \evalt{\vt{N}}{N}{C}\\
  \evalt{E_1}{N_1}{\evalt{E_2}{N_2}{(N_3 = N_1+N_2 \OT C)}} & \LI & \evalt{\addt{E_1}{E_2}}{N_3}{C}\\  \evalt{E_1}{N_1}{\evalt{E_2}{N_2}{(N_3 = N_1-N_2\OT C)}} & \LI & \evalt{\subt{E_1}{E_2}}{N_3}{C}\\
  \evalt{E_1}{N_1}{\evalt{E_2}{N_2}{(N_1>N_2\OT C)}} & \LI & \evalt{\gtt{E_1}{E_2}}{sz}{C}\\
  \evalt{E_1}{N_1}{\evalt{E_2}{N_2}{(N_1\leq N_2\OT C)}} & \LI & \evalt{\gtt{E_1}{E_2}}{z}{C}\\
  \evalt{E}{N_1}{(\memform{N_1}{N_2}\OT(\memform{N_1}{N_2}\LI C))} & \LI & \evalt{\gett{E}}{N_2}{C}\\
  \evalt{E_1}{N_1}{\evalt{E_2}{N_2}{(\memform{N_1}{N_3}\OT(\memform{N_1}{N_2}\LI C))}} & \LI & \evalt{\sett{E_1}{E_2}}{N_2}{C}\\
  \evalt{E_1}{N_1}{\evalt{E_2}{N_2}{C}} & \LI & \evalt{\seqt{E_1}{E_2}}{N_2}{C}\\
  \evalt{E_1}{N_1}{\evalt{E_2}{N_2}{\evalt{\wht{E_1}{E_2}}{C}}}\OT N_1> z & \LI & \evalt{\wht{E_1}{E_2}}{z}{C}\\
  \evalt{E_1}{N_1}{C}\OT N_1= z & \LI & \evalt{\wht{E_1}{E_2}}{z}{C}\\
\end{array}\end{equation*}

\end{center}
\caption{
The program clause formulas modeling the evaluation semantics given in
section \ref{subsec:evalsems} in the specification logic.
}
\label{fig:evalt}
\end{framed}
\end{figure*}

Let $\Gamma$ be a set exclusively containing the formulas from Figure
\ref{fig:evalt}. The evaluation relation $\evalrelation{E}{M}{N}{M'}$
defined in Section \ref{subsec:evalsems} is translated to the sequent:
\begin{equation}\label{eqn:evalmapping}
\usequent{\memmapping{M}}{\evalt{\termmapping{E}}{\NT{N}}{\top}}
\end{equation} The use of $\top$ here ``throws away'' the memory
resulting from this evaluation, i.e what is $M'$ in the evaluation
relation.  If inspection of this memory is necessary, we may replace
$\top$ with a goal formula. For example, if we wanted to inspect the
value in memory stored at location $1$ we could use the following
sequent:
\[
\usequent{\memmapping{M}}{\evalt{\termmapping{E}}{\NT{N_1}}{(\memform{1}{N_2}\OT\top)}}. \]

In our model of the object system, memory is accessed and modified
using the subformula
\[ \memform{N_1}{N_2}\OT(\memform{N_1}{N_3}\LI C) \] where $N_1,N_2,$
and $N_3$ have the type $\naturals$. When $N_2 = N_3$ the operation is
a lookup, otherwise it is an update. If $M$ is memory undefined at $N_1$
then a proof of this subformula must have the following structure due
to the meta-theoretical results from Subsection \ref{subsec:speclogic} and our
model.
\begin{customprooftree}\label{prooftree:nup}
    \AxiomC{$$}
    \RightLabel{$id$}
    \UnaryInfC{$\usequent{\memform{N_1}{N_2}}{\memform{N_1}{N_2}}$}

    \AxiomC{$\vdots$}
    \UnaryInfC{$\usequent{\memmapping{M},\memform{N_1}{N_3}}{C}$}

    \RightLabel{$\LI R$}
    \UnaryInfC{$\usequent{\memmapping{M}}{\memform{N_1}{N_3}\LI C}$}

  \RightLabel{$\OT R$}
  \BinaryInfC{$\usequent{\memmapping{M},\memform{N_1}{N_2}}{\memform{N_1}{N_2}\OT(\memform{N_1}{N_3}\LI C)}$}
\end{customprooftree}
Therefore, our treatment of state in our specification logic has an
intuitive and logical reading of ``remove the value $N_2$ at location $N_1$
and replace it with the value $N_3$''.

\subsection{Proofs as Computations}
\label{subsec:computations}

Consider the $\Omega$ derivation given in Subsection
\ref{subsec:derivations} for the relation
\[ \evalrelation{\ensuremath{1 \assign \deref{2}}}{M''}{M''(2)}{M'''}\] where
\[ M''=\memoryupdate{\memoryupdate{M}{2}{M(0)}}{0}{M(1)}\] and $M$ is
memory defined at $0,1$ and $2$. The sequent corresponding to this
evaluation relation is
\begin{tabbing} \qquad\=\qquad\=\kill
\>$\Gamma;\memform{\NT{0}}{\NT{M(1)}},\memform{\NT{1}}{\NT{M(1)}},\memform{\NT{2}}{\NT{M(0)}},\memmapping{O}\SQ$\\
\>\>$\evalt{(\ensuremath{1 \assign \deref{2}})}{M''(0)}{\top}$
\end{tabbing} where $O$ is memory and for all $n\in\naturals$ if $n>3$
then $O(n) = M(n)$, otherwise $O(n)$ is undefined.  A proof of this
sequent can be found in Figure \ref{fig:specproofex}.  In this proof
right introduction rules are omitted.

\begin{figure*}
\begin{framed}
\begin{customprooftree}
\AxiomC{}
\RightLabel{$\top R$}
\UnaryInfC{$\vdots$}
\noLine
\UnaryInfC{$\usequent{\memform{\NT{0}}{\NT{M(1)}},\memform{\NT{1}}{\NT{M(1)}},\memform{\NT{2}}{\NT{M(0)}},\memmapping{O}}{(\memform{\NT{2}}{\NT{M(0)}} \OT \memform{\NT{2}}{\NT{M(0)}}) \LI (\memform{\NT{1}}{\NT{M(1)}} \OT (\memform{\NT{1}}{\NT{M(0)}} \LI \top))}$}
\RightLabel{$BC_U$}
\UnaryInfC{$\usequent{\memform{\NT{0}}{\NT{M(1)}},\memform{\NT{1}}{\NT{M(1)}},\memform{\NT{2}}{\NT{M(0)}},\memmapping{O}}{\evalt{\NT{2}}{\NT{2}}{(\memform{\NT{2}}{\NT{M(0)}} \OT (\memform{\NT{2}}{\NT{M(0)}} \LI (\memform{\NT{1}}{\NT{M(1)}} \OT (\memform{\NT{1}}{\NT{M(0)}} \LI \top))))}}$}
\RightLabel{$BC_U$}
\UnaryInfC{$\usequent{\memform{\NT{0}}{\NT{M(1)}},\memform{\NT{1}}{\NT{M(1)}},\memform{\NT{2}}{\NT{M(0)}},\memmapping{O}}{\evalt{\gett{\NT{2}}}{\NT{M(0)}}{(\memform{\NT{1}}{\NT{M(1)}} \OT (\memform{\NT{1}}{\NT{M(0)}} \LI \top))}}$}
\RightLabel{$BC_U$}
\UnaryInfC{$\usequent{\memform{\NT{0}}{\NT{M(1)}},\memform{\NT{1}}{\NT{M(1)}},\memform{\NT{2}}{\NT{M(0)}},\memmapping{O}}{\evalt{\NT{1}}{\NT{1}}{\evalt{\gett{\NT{2}}}{\NT{M(0)}}{(\memform{\NT{1}}{\NT{M(1)}} \OT (\memform{\NT{1}}{\NT{M(0)}} \LI \top))}}}$}
\RightLabel{$BC_U$}
\UnaryInfC{$\usequent{\memform{\NT{0}}{\NT{M(1)}},\memform{\NT{1}}{\NT{M(1)}},\memform{\NT{2}}{\NT{M(0)}},\memmapping{O}}{\evalt{\sett{\NT{1}}{\gett{\NT{2}}}}{\NT{M(0)}}{\top}}$}
\end{customprooftree}
\caption[]{
A proof of the sequent $\usequent{\memform{\NT{0}}{\NT{M(1)}},\memform{\NT{1}}{\NT{M(1)}},\memform{\NT{2}}{\NT{M(0)}},\memmapping{O}}{\evalt{(\ensuremath{1 \assign \deref{2}})}{M''(0)}{\top}}$.
}
\label{fig:specproofex}
\end{framed}
\end{figure*}
A mimicry of the derivation can be seen in this proof: for every rule
instance that occurs in the derivation there is a corresponding $BC_u$
instance with a principal program clause formula1 that models that
derivation rule instance.
\section{Reasoning about Properties of Imperative 
Programs Using the Model}
\label{sec:reasoning}
In this section, we show that our model of the object system can be
used to prove a similar property to what was shown in Subsection
\ref{subsec:sosreasoning}. As it was in Subsection
\ref{subsec:sosreasoning}, the property proven is trivial. However,
the goal of this exercise is to demonstrate that the structure of the
argument on the model follows very closely the structure of the
argument from Subsection \ref{subsec:sosreasoning}. In this sense,
reasoning about properties of our model can be intuitive. This
advantage when reasoning is a result of the mimicry exposed in
Subsection \ref{subsec:computations}.

\subsection{Correctness as a Property of Proofs in the Model}
\label{subsec:correctness} 
We must model the Lemma \ref{lem:sumcorrectnesssoslemma} and Theorem
\ref{thm:sumcorrectnesssostheorem} in the specification logic.  The
modeled lemma and theorem rely on the sum program from Equation
\ref{eqn:sum}, the term translation function from Equation
\ref{eqn:termmapping}, the memory translation function from Equation
\ref{eqn:memmapping}, the evaluation predicate $\evaltname$ from
Figure \ref{fig:evalt}, and , tacitly, the relationship translation
function from Equation \ref{eqn:evalmapping}.  As defined in
Subsection \ref{subsec:formalization}, the set $\Gamma$ is exclusively
inhabited by formulas from Figure \ref{fig:evalt}.

Lemma \ref{lem:sumcorrectnessformlemma} encodes Lemma
\ref{lem:sumcorrectnesssoslemma} from Subsection
\ref{subsec:sosreasoning} in the specification logic.

\begin{lem}[Total Correctness of
$\termmapping{Q}$]\label{lem:sumcorrectnessformlemma} $\A N_1, N_2, M,
M_0$ if $N_1,N_2\in\naturals$ and $M =
\memoryupdate{\memoryupdate{M_0}{\NT{0}}{\NT{N_2}}}{\NT{1}}{\NT{N_1}}$
then $\E N_3,$ $N_3\in\naturals$ and
$\usequent{\memmapping{M}}{\evalt{\termmapping{Q}}{\NT{0}}{\memform{\NT{0}}{\NT{N_3}}\OT\top}}$
and $N_3 = N_2+\sum\limits_{i = 0}^{N_1} i$
\end{lem}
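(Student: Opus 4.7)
The plan is to prove this by induction on $N_1$, following the structure of Lemma \ref{lem:sumcorrectnesssoslemma} and exploiting the proof mimicry established in Section \ref{subsec:computations}. At a high level, each case of the informal SOS proof will be transcribed into a coincided proof in the specification logic whose $BC_u$ instances correspond one-for-one to the natural semantics rule instances used in the SOS derivation.

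For the base case $N_1 = 0$, I would choose $N_3 = N_2$ and construct the proof by beginning with a $BC_u$ application on the program clause for the while-terminates rule, namely $\evalt{E_1}{N_1}{C} \OT N_1 = z \LI \evalt{\wht{E_1}{E_2}}{z}{C}$. This reduces the obligation to evaluating the condition $\gtt{(\gett{\NT{1}})}{\NT{0}}$ to zero, which discharges via the $z$-valued greater-than clause chained with a $BC_u$ on the lookup clause that consumes and returns $\memform{\NT{1}}{\NT{0}}$. The continuation $\memform{\NT{0}}{\NT{N_2}} \OT \top$ is then closed by $\OT R$ using $id$ on $\memform{\NT{0}}{\NT{N_2}}$ and $\top R$ absorbing the remaining cells of $\memmapping{M_0}$.

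For the inductive step, assume the lemma holds whenever the value at location $\NT{1}$ is strictly less than $N_1$, and let $N_3$ be the witness delivered by the inductive hypothesis applied to $N_1 - 1$ at location $\NT{1}$ and $N_2 + N_1$ at location $\NT{0}$. I would then apply $BC_u$ on the while-iterates clause. Its body splits by $\OT R$ into (i) a side condition $N_1 > z$, and (ii) an evaluation chain in which the condition is shown to yield a nonzero value (via the $sz$-valued greater-than clause, with a lookup consuming and restoring $\memform{\NT{1}}{\NT{N_1}}$), followed by evaluation of the loop body and recursively of $\wht{\cdot}{\cdot}$. The body is evaluated in two stages using the memory-update pattern $\memform{N_1}{N_2} \OT (\memform{N_1}{N_3} \LI C)$: first, location $\NT{0}$ is rewritten from $\NT{N_2}$ to $\NT{N_2 + N_1}$; second, location $\NT{1}$ is rewritten from $\NT{N_1}$ to $\NT{N_1 - 1}$. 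The continuation is then precisely the sequent $\usequent{\memmapping{M'}}{\evalt{\termmapping{Q}}{\NT{0}}{\memform{\NT{0}}{\NT{N_3}} \OT \top}}$ with $M' = \memoryupdate{\memoryupdate{M}{\NT{0}}{\NT{N_2+N_1}}}{\NT{1}}{\NT{N_1-1}}$, which is exactly an instance to which the inductive hypothesis applies.

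The main obstacle will be the careful bookkeeping of the bounded context across the implicit $\LI L$ and $\OT R$ steps buried inside the backward chaining rules: the $\memform{\NT{1}}{\cdot}$ resource must be threaded through the lookup inside the condition so that the later write to $\memform{\NT{0}}{\cdot}$ still has the appropriate cell available, and the memory partitioning demanded by $\OT R$ must leave enough cells for the recursive sub-goal. A minor arithmetic step is also needed to verify that the inductive hypothesis yields $N_3 = (N_2 + N_1) + \sum_{i=0}^{N_1-1} i = N_2 + \sum_{i=0}^{N_1} i$. The soundness of the transcription itself is guaranteed by Theorem \ref{thm:logicsystemequivalence} together with the mimicry observation of Subsection \ref{subsec:computations}, so no new meta-theory is required.
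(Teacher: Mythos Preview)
Your proposal is correct and follows essentially the same approach as the paper: induction on $N_1$, with the base case discharged by backchaining on the while-terminates clause and the inductive step by backchaining on the while-iterates clause, reducing to an instance of the inductive hypothesis with the memory rewritten to hold $N_2+N_1$ at location $0$ and $N_1-1$ at location $1$, followed by the same arithmetic regrouping. The paper is terser about the mechanics (it simply says ``proof analysis reveals it suffices to build a proof for'' the updated sequent and appeals to the mimicry with the SOS derivation), whereas you spell out the $BC_u$, $\OT R$, and memory-threading steps explicitly, but the argument is the same.
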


The value of a memory location was extracted by function application
in the object system.  In the encoding we retrieve the value from
memory after program evaluation via a continuation formula.
Specifically, in Lemma \ref{lem:sumcorrectnessformlemma} that
continuation formula is $\memform{\NT{0}}{\NT{N_3}}\OT\top$.  This
formula extracts only the value in memory at location $0$.  This is
where we expect the result of program $Q$ to be stored.

The encoding of Theorem \ref{thm:sumcorrectnesssostheorem} is similar.
Observe that $N_1$ and $N_2$ are immediately initialized upon
evaluation of $P$; this use of $N_1$ and $N_2$ is only meant to ensure
that memory $M$ is defined at locations $0$ and $1$.

\begin{thm}[Total Correctness of
$\termmapping{P}$]\label{thm:sumcorrectnessformtheorem} $\A N_1, N_2,
M, M_0$ if $N_1,N_2\in\naturals$ and $M =
\memoryupdate{\memoryupdate{M_0}{\NT{0}}{\NT{N_2}}}{\NT{1}}{\NT{N_1}}$
then $\E N_3,$ $N_3\in\naturals$ and
$\usequent{\memmapping{M}}{\evalt{\termmapping{P}}{\NT{0}}{\memform{\NT{0}}{\NT{N_3}}\OT\top}}$
and $N_3 = \sum\limits_{i = 0}^{N} i$
\end{thm}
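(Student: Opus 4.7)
The plan is to mirror the informal argument given for Theorem~\ref{thm:sumcorrectnesssostheorem} in the specification logic, exploiting the proof/derivation mimicry established in Subsection~\ref{subsec:computations}. The reduction to Lemma~\ref{lem:sumcorrectnessformlemma} should happen almost mechanically once the leading assignments of $P$ are peeled off by backward chaining.

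First, I would unfold $\termmapping{P}$ using Equation~\ref{eqn:termmapping}. Since the sum program from Equation~\ref{eqn:sum} has the form $\ensuremath{0 \assign 0} ; (\ensuremath{1 \assign N} ; U)$, the encoded program is $\seqt{\sett{\NT{0}}{\NT{0}}}{\seqt{\sett{\NT{1}}{\NT{N}}}{\termmapping{U}}}$, so the goal sequent presents an evaluation predicate whose outermost term is a $\seqt{\cdot}{\cdot}$. I would then construct the proof from the bottom up via repeated $BC_u$ applications against the program clauses of Figure~\ref{fig:evalt}. Each $\seqt{\cdot}{\cdot}$ clause triggers nested evaluation continuations, and each $\sett{\cdot}{\cdot}$ clause uses the subformula $\memform{N_1}{N_3}\OT(\memform{N_1}{N_2}\LI C)$ to consume the $\memform{\NT{0}}{\NT{N_2}}$ atom (respectively $\memform{\NT{1}}{\NT{N_1}}$) from the bounded context and reintroduce $\memform{\NT{0}}{\NT{0}}$ (respectively $\memform{\NT{1}}{\NT{N}}$), exactly mirroring the memory updates of the natural semantics rule. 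The trivial subproofs for the constants $\NT{0}$, $\NT{1}$, and $\NT{N}$ are dispatched by the first clause in Figure~\ref{fig:evalt}.

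After these reductions the only remaining open leaf is a sequent of the form
\[ \usequent{\memmapping{M'}}{\evalt{\termmapping{U}}{\NT{0}}{\memform{\NT{0}}{\NT{N_3}} \OT \top}}, \]
where $M' = \memoryupdate{\memoryupdate{M_0}{\NT{0}}{\NT{0}}}{\NT{1}}{\NT{N}}$. I would then invoke Lemma~\ref{lem:sumcorrectnessformlemma} on $M'$, instantiating its $N_1$ parameter with $N$ and its $N_2$ parameter with $\NT{0}$. This yields an $N_3 \in \naturals$ together with a proof of the remaining leaf for which $N_3 = 0 + \sum_{i=0}^{N} i = \sum_{i=0}^{N} i$, which is exactly the conclusion required.

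The main obstacle is the bookkeeping of the bounded context during the two $BC_u$ applications for the assignments: only the $\memform$ atoms at locations $\NT{0}$ and $\NT{1}$ may be singled out from $\memmapping{M}$ to match the $\memform{N_1}{N_2}$ pattern, while the remaining atoms of $\memmapping{M_0}$ must be carried through untouched so that they are still available to the subproof handled by the lemma. The coincided-proof structure guaranteed by Theorem~\ref{thm:coincidedprovability}, together with the intended reading of the assignment subformula discussed at the end of Subsection~\ref{subsec:formalization}, is what licences this locality and makes the derivation track the informal argument step-for-step.
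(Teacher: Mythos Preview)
Your proposal is correct and follows essentially the same approach as the paper: reduce the sequent for $\termmapping{P}$ by proof analysis (backward chaining through the $seq$ and $set$ clauses to peel off the two initial assignments) to a sequent for the loop with bounded context containing $\memform{\NT{0}}{\NT{0}}$ and $\memform{\NT{1}}{\NT{N}}$, then invoke Lemma~\ref{lem:sumcorrectnessformlemma} with $N_2 = 0$ and $N_1 = N$. The paper compresses the mechanical backward-chaining steps you spell out into the phrase ``proof analysis of this sequent reveals it is sufficient to prove,'' but the structure is identical.
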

 
\subsection{Reasoning about Proofs}
\label{subsec:reasoning}

Reasoning about the model is structured according to the
reasoning structure in Subsection \ref{subsec:sosreasoning}.

\begin{proof}[Proof of Lemma \ref{lem:sumcorrectnessformlemma}] 
This will be shown by induction on $N_1$. In the first case where
$N_1=0$, the proof in Figure \ref{fig:baseevalproof} can be
constructed and we can conclude that $N_3 = N_2$.  Therefore, we have
that $N_3 = N_2+\sum\limits_{i = 0}^{N_1} i$.
\begin{figure*}
\begin{framed}
\begin{customprooftree} \input{baseevalproof}
\end{customprooftree}
\caption[]{ A proof of the judgment
$\usequent{\memmapping{M}}{\evalt{\termmapping{Q}}{\NT{0}}{\memform{\NT{0}}{\NT{N_3}}\OT\top}}$
where $N_1,N_2\in\naturals$, $N_1=0$, $M =
\memoryupdate{\memoryupdate{M_0}{\NT{0}}{\NT{N_2}}}{\NT{1}}{\NT{N_1}}$,
and $M'$ is a partial function undefined at $0$,$1$ and equal to $M_o$
otherwise.  }
\label{fig:baseevalproof}
\end{framed}
\end{figure*}

In the second case we assume that Lemma
\ref{lem:sumcorrectnessformlemma} holds if $M(1)<N_1$ and must show
this lemma holds when $M(1) = N_1$; this is our inductive hypothesis.
Proof analysis of the sequent
\begin{tabbing}
\qquad\=\qquad\=\kill
\>$\Gamma;\memmapping{M'},\memform{\NT{1}}{\NT{M(1)}},\memform{\NT{0}}{\NT{M(0)}}\SQ$\\
\>\>$\evalt{\termmapping{Q}}{\NT{0}}{(\memform{\NT{0}}{\NT{N_3}}}$
\end{tabbing}
 reveals that it suffices to build a proof for the sequent
\begin{tabbing}
\qquad\=\qquad\=\kill
\>$\Gamma;\memform{\NT{1}}{\NT{(M(1)-1)}},\memform{\NT{0}}{\NT{(M(0)+M(1))}},\memmapping{M'}$\\
\>\>$\SQ\evalt{\termmapping{Q}}{\NT{0}}{(\memform{\NT{0}}{\NT{N_3}}\OT\top)}.$
\end{tabbing}
We omit such a proof in our discussion here; it mimics the derivation
for the second case given in Subsection \ref{subsec:sosreasoning}, it
is tedious, and, its construction is completely mechanizable in our
specification logic. The inductive hypothesis yields this sequent.
Additionally, by the inductive hypothesis, we have $N_4 =
(M(0)+M(1))+\sum\limits_{i = 0}^{M(1)-1} i$ for some
$N_4\in\naturals$.  This is equivalent to $N_4 = M(0)+\sum\limits_{i =
0}^{M(1)} i$ and thus, $N_3 = N_4$.
\end{proof}

\begin{proof}[Proof of Theorem \ref{thm:sumcorrectnessformtheorem}]

We must prove
\[
\usequent{\memmapping{M}}{\evalt{\termmapping{P}}{\NT{0}}{\memform{\NT{0}}{\NT{N_3}}\OT\top}}\]
and $N_3 = \sum\limits_{i = 0}^{N} i$. Proof analysis of this sequent
reveals it is sufficient to prove the sequent
\[
\usequent{\memmapping{M'},\memform{\NT{1}}{N},\memform{\NT{0}}{0}}{\evalt{\termmapping{Q}}{\NT{0}}{(\memform{\NT{0}}{\NT{N_3}}
\OT \top)}}\]. We have both by Lemma
\ref{lem:sumcorrectnessformlemma}.
\end{proof}

\subsection{Extracting Properties from the Model}
\label{subsec:extraction}
We would like to extract Lemma \ref{lem:sumcorrectnessformlemma} and
Theorem \ref{thm:sumcorrectnessformtheorem} into the object system.
In general, doing so requires some confidence that the extracted
property is meaningful in the object system.  Such confidence is
typically acquired through an informal adequacy argument
\cite{harper07jfp}.

The adequacy of an encoding can be shown by giving a bijective
translation function from the object system to the encoding.  There
are complexities in providing such a translation for our encoding; in
the object system, memory is a term while in the encoding memory is
formula.  How such a translation can be given is left to future work.
\section{Conclusion}
\label{sec:conclusion}

We have considered in this paper the possibility of formalizing the
process of reasoning about properties of imperative programs. 
Towards this end, we have described a specification logic that can
transparently model imperative programming languages with semantics
defined in an natural semantics style. 
An important aspect of this specification logic is that its proof
relation can be restructured so as to yield derivations that closely
resemble the ones that may be constructed in the original natural semantics style
encodings of object systems. 
We have illustrated how this characteristic can be exploited in
reasoning about the properties of the object systems.
In our example, we have used an informal style of reasoning over
specification logic derivations.
However, we believe that this reasoning process can be formalizing and
we are examining this aspect in ongoing work. 
In particular, we are exploring the idea of using a two-level logic
approach~\cite{gacek12jar,mcdowell02tocl} that has been
successfully exploited in conjunction with an intuitionistic
specification logic in the Abella system~\cite{gacek12jar}.
In this approach, we encode a specification logic via its derivability
relation within a rich ``reasoning'' logic: by using the capabilities
of the reasoning logic, we then obtain the ability to prove properties
about derivations in the specification logic.
One of our immediate goals is to accommodate a linear specification
logic within the same reasoning logic that underlies Abella, thereby
producing a variant of Abella that supports the development of formal
arguments related to systems oriented around resource usage. 
Once we have an implementation of such a system at hand, the next step
would be to use it to formalize the kinds of arguments we have
presented in this paper. 

In addition to actually implementing the ideas we have discussed in
this paper within a formal system, we must also extend them so that we
can reason about a larger, more realistic collection of programs. 
The imperative programs that we have considered in this paper use
programming language constructs permitting non-termination and memory
manipulations, 
i.e. lookup and update. In essence, we have demonstrated that our
approach can be effective when reasoning about properties of basic
imperative 
programs lacking pointers (because memory values were never used in
lookups) or dynamic allocation.  
Going forward, we would like to examine two particular kinds of
extensions to this work.

\begin{quote}
  \textbf{The language} chosen in this paper does not permit complex
  notions of data, dynamic memory allocation, or functional aspects.
  It does permit references but the imperative program analyzed does
  not use them. A more relevant language to model would be a subset of SML
  \cite{milner90sml} excluding data-type definitions and the module
  subsystem. This subset would not make modeling evaluation semantics
  much more complex. For example, memory allocation can be treated
  naturally using universal quantifiers. We conjecture that such
  changes will not alter the intuitive nature of reasoning.

  \textbf{The program} chosen and its correctness property is
  trivial. Programs in common use among other researchers concerned
  with reasoning about imperative programs are linked list (singly or
  doubly) manipulation programs and implementations of the
  Schorr-Waite algorithm\cite{bornat00mpc}. Additionally, properties
  of programs using references can be particularly difficult to reason
  about due to aliasing. Aliasing occurs when a location can be
  accessed in two different ways. For example, the program
  \[ \ensuremath{1 \assign 0 ; 2 \assign 1 ; 3 \assign 1 ; \deref{2} \assign 4 ; \deref{\deref{3}}} \] is one where aliasing
  occurs; the last two program expressions will update and lookup,
  respectively, location $1$. Reasoning in basic Hoare logic is
  unsound when programs containing aliasing are considered. Hoare
  logic can be extended such that reasoning about pointers is
  technically feasible but complex\cite{ohearn01csl}. We have an
  inchoate idea that a treatment of aliasing should not require major
  changes to our specification logic because locations are not
  named. Therefore, references to aliased data is explicit. How this
  treatment will affect reasoning intuitions remains unclear.
\end{quote}

Finally, we must better understand the connections between our
approach and others such as ones using Hoare and separation based
logic\cite{ohearn01csl,brochenin07lfcs,mehta03cade,jia06pls}, pointer
assertion logic\cite{jensen97pldi}, parametric shape
analysis\cite{levami00sa}, and aliasing logic\cite{bozga04sa}.
\bibliographystyle{abbrvnat}
\bibliography{local}

\end{document}